\title{Point-to-set Principle and Constructive Dimension Faithfulness} 
\author[1]{Satyadev Nandakumar}
\author[2]{Subin Pulari}
\author[1]{Akhil S}
\affil[1]{
	Department of Computer Science and Engineering\\
	Indian Institute of Technology Kanpur,
	Kanpur, Uttar Pradesh, India.
}
\affil[1]{\{satyadev,akhis\}@cse.iitk.ac.in}
\affil[2]{
	Universit\'e de Bordeaux, CNRS, Bordeaux INP, LaBRI, UMR 5800, F-33400, Talence, France
}
\affil[2]{subin.pulari@labri.fr}
\newtheorem{theorem}{Theorem} 
\newtheorem{lemma}{Lemma} 
\newtheorem{corollary}{Corollary}
\newtheorem{definition}{Definition}
\newtheorem*{theorem*}{Theorem}
\newtheorem{conjecture}{Conjecture}
\newcommand{\PROOF}{\begin{proof}}
\newcommand{\QED}{\end{proof}}
\newcommand{\floor}[1]{ \left\lfloor #1 \right\rfloor }
\newcommand{\restr}{\upharpoonright}
\newcommand{\N}{\mathbb{N}}
\newcommand{\Q}{\mathbb{Q}}
\newcommand{\R}{\mathbb{R}}
\newcommand{\cdim}{\mathrm{cdim}}
\renewcommand{\dim}{{\mathrm{dim}}}
\renewcommand{\P}{\ensuremath{{\mathrm P}}}
\newcommand{\F}{{\mathcal{F}}}
\newcommand{\X}{{\mathbb{X}}}
\newcommand{\UU}{{\mathcal{W}}}
\newcommand{\Rplus}{[0,\infty)}
\renewcommand{\F}{\mathcal{F}}
\newcommand{\U}{U}
\newcommand{\defPhi}{\bigcup_{n \in \N}\{U_i^n\}_{i \in
		\N}}
\newcommand{\x}{x}	
\newcommand{\Rho}{\mathcal{P}}
\date {}
\begin{document}

\maketitle

\begin{abstract}
	Hausdorff $\Phi$-dimension is a notion  of Hausdorff dimension developed using a restricted class of coverings of a set. We introduce an effective version of Hausdorff $\Phi$-dimension, which we call constructive $\Phi$-dimension.
	We prove a point-to-set principle for $\Phi$-dimension. We also provide a characterization of 
	constructive $\Phi$-dimension using Kolmogorov complexity and $s$-gales.

	Finally, we apply these tools to study \emph{faithfulness} of coverings $\Phi$. A family of coverings $\Phi$  is said to be
	faithful to Hausdorff dimension if the $\Phi$-dimension and
	Hausdorff dimension coincide for every set. Similarly, $\Phi$ is said to be
	faithful to constructive dimension if the constructive $\Phi$-dimension and
	constructive dimension coincide for every set.
	
	We derive the necessary and sufficient
	conditions for the constructive dimension faithfulness of the
	coverings generated by the Cantor series expansion,  based on the  
	terms of the expansion.
	Using the point-to-set principle for Cantor coverings, we show that the same condition characterises Hausdorff dimension faithfulness of  Cantor coverings, thereby giving an information theoretic proof of the result by Albeverio, Ivanenko, Lebid, and Torbin
	\cite{Albeverio2020}. 
	
	We investigate the question of weather the notions of faithfulness at Hausdorff and constructive levels are equivalent.
	Using a new
	technique for the construction of sequences satisfying a certain
	Kolmogorov complexity condition, we show that the notions of
	``faithfulness'' of Cantor coverings at the Hausdorff and
	constructive levels are equivalent, independent of the log-limit condition.

\end{abstract}

\section{Introduction}

\subsection{Faithfulness in dimension}

In the study of randomness and information, an important concept is
the preservation of randomness across multiple representations of the
same object. Martin-L\"of randomness and computable randomness, for
example, are preserved among different base-$b$ representations of the
same real (see Downey and Hirschfeldt \cite{Downey10}, Nies
\cite{Nies2009}, Staiger \cite{Staiger02}) and when we convert from
the base-$b$ expansion to the continued fraction expansion
(\cite{Nandakumar2008a}, \cite{NANDAKUMAR2022104876},
\cite{Nandakumar2020c}).

A quantification of this notion is whether the \emph{rate} of
information is preserved across multiple
representations. This rate is studied using a constructive analogue of Hausdorff dimension called constructive dimension \cite{Lutz2003b}\cite{Mayordomo02}.
Hitchcock and Mayordomo \cite{HitchMayor13} show that  
constructive
dimension is preserved across base-$b$ representations. However, in a
recent work, Akhil, Nandakumar and Vishnoi \cite{Nandakumar2023b} show
that the rate of information is not preserved across all
representations. In particular, they show that constructive dimension
is \emph{not} preserved when we convert from base-$b$ representation
to continued fraction representation of the same real.

This raises the following question: \emph{Under which settings is the
	effective rate of information - \emph{i.e.} constructive dimension -
	preserved when we change representations of the same real?}. Since
constructive dimension is a constructive analogue of Hausdorff
dimension, this question is a constructive analogue of the concept of
``faithfulness'' of Hausdorff dimension.

A family of covering sets $\Phi$ over a metric space $\X$ is ``faithful'' to Hausdorff dimension if the  dimension of every
set $\F \subseteq \X$ defined using covers constructed using $\Phi$, called the Hausdorff
$\Phi$-dimension, coincides with the Hausdorff dimension of $\F$.   Faithfulness is well-studied as determining the
Hausdorff dimension of a set is often a difficult problem, and
faithful coverings help simplify the calculation. This
notion is introduced in a work of Besicovitch \cite{Besicovitch52},
which shows that the class of dyadic intervals is faithful for
Hausdorff dimension. Rogers and Taylor
\cite{Rogers70} further develop the idea to show that all covering
families generated by comparable net measures are faithful for
Hausdorff dimension. This implies that the class of covers generated
by the base $b$ expansion of reals for any $b \in \N \setminus \{1\}$
is faithful for Hausdorff dimension. However, not all coverings are
faithful for Hausdorff dimension. A natural example is the continued
fraction representation, which is not faithful for Hausdorff dimension
\cite{PeresTorbin}. Faithfulness of Hausdorff dimension has then
been studied in various classes of coverings \cite{Albeverio2020},
\cite{Albeverio05}, \cite{IbragimTorbin15}, \cite{PeresTorbin}.

\subsection{Constructive Dimension Faithfulness}

In this work, we introduce a constructive analogue of Hausdorff
$\Phi$-dimension which we call constructive $\Phi$-dimension. A family of covering sets $\Phi$ over a metric space $\X$ is ``faithful'' to constructive dimension if the  constructive $\Phi$-dimension of every set $\F \subseteq \X$ coincides with the constructive dimension of $\F$. 
Mayordomo and Hitchcock \cite{HitchMayor13} show that all
base-$b$ representations of reals, which are faithful for Hausdorff dimension, are also faithful for constructive dimension. On
the other hand, Nandakumar, Akhil, and Vishnoi \cite{Nandakumar2023b} show that the
continued fraction expansion, which is not faithful for Hausdorff
dimension is also not faithful for constructive dimension.
This raises the natural question: \emph{Are faithfulness with respect to
	Hausdorff dimension and faithfulness with respect to constructive
	dimension equivalent notions?} A positive answer to this question
implies that Hausdorff dimension faithfulness, a geometric notion, can be
studied using the tools from information theory. Conversely, the
faithfulness results of Hausdorff dimension can help us understand the
settings under which constructive dimension is invariant for
\emph{every} individual real.

To study this question, we use a $\Phi$-dimensional analogue
of the point-to-set principle.
The point-to-set principle introduced by J. Lutz and
N. Lutz \cite{Lutz2018} relates the Hausdorff dimension of a set of $n$-dimensional reals with
the constructive dimensions of points in the set, relative to a minimizing
oracle. This theorem has been instrumental in answering open questions
in classical fractal geometry using the theory of computing (See
\cite{Lutz2020}, \cite{LutzStull18}, \cite{LutzStull20},
\cite{NeilLutz21}, \cite{Lutz2018}). Mayordomo, Lutz, and Lutz
\cite{Lutz2022} extend this work to arbitrary separable metric spaces.

\subsection{Our Results}

In this work, we formalise the notion of Hausdorff $\Phi$-dimension and constructive $\Phi$-dimension.   We prove the point-to-set principle for
$\Phi$-dimension (Theorem \ref{thm:pspforPhidim}).
We show that the notion of constructive $\Phi$-dimension is robust, by giving equivalent characterisations using Kolmogorov complexity  (Theorem \ref{thm:KCCharnofCPhidim}) and $\Phi$-$s$-supergales (Theorem \ref{thm:CdimPhiGales}).

The Cantor series
expansion, introduced by Georg Cantor \cite{Cantor1932}, uses a sequence of natural numbers $Q = \{n_k\}_{k \in \N}$ as the terms of representation. It generalises the notion of base-$b$ representation. Whereas
base-$b$ representation use
exponentials with respect to a fixed $b$, 
$\{b^n\}_{n \in \N}$, the Cantor series representation $Q = \{n_k\}_{k \in \N}$ uses factorials $\{n_1 \cdot n_2 \dots n_k\}_{k \in \N}$ as
the basis for representation. 

Using Kolmogorov complexity, we derive a log-limit condition on the terms of the Cantor series expansion that characterizes the constructive dimension faithfulness of the associated constructive Cantor coverings (Theorem \ref{thm:EffFaithCondn}).  We also characterise the Hausdorff dimension faithfulness of Cantor coverings using the same
log-limit condition. Thereby we give an information theoretic proof of the result by Albeverio, Ivanenko, Lebid and Torbin  \cite{Albeverio2020} (Theorem~\ref{thm:AlbIvankio2}).

We develop a
combinatorial construction of sequences having Kolmogorov complexities
that grow at the same rate as a given sequence relative to any given
oracle (Theorem \ref{thm:KCChasinglemma}). This new combinatorial
construction may be of independent interest in the study of
randomness.

We then study the problem of equivalence of faithfulness notions at the constructive and Hausdorff levels.
Using the combinatorial construction and the point-to-set principle for $\Phi$-dimension, we show (independently of the log-limit condition) that for constructive Cantor coverings, the notions of
constructive faithfulness and Hausdorff dimension faithfulness are
equivalent (Theorem \ref{thm:equivalenceOfFaithfulness}).

\subsection{Organisation of the paper}

Section~\ref{sec:Prelim} contains the preliminaries, including the
notation and the necessary background on Hausdorff dimension and
constructive dimension.
The paper can be conceptually divided into two parts.

\medskip
\noindent\textbf{Part I: Constructive $\Phi$-dimension and its properties.}
This part comprises Sections~\ref{sec:PhiDim}--\ref{sec:EffPhi} and
develops the theory of constructive $\Phi$-dimension.
\begin{itemize}
	\item In Section~\ref{sec:PhiDim}, we introduce Hausdorff
	$\Phi$-dimension. 
	
	\item Section~\ref{sec:PSP} establishes a point-to-set principle for
	$\Phi$-dimension (Theorem~\ref{thm:pspforPhidim}). 
	
	\item In Section~\ref{sec:EffPhi}, we introduce constructive
	$\Phi$-dimension, defined via effective $\Phi$-null covers.
	
	We also show robustness by giving equivalent
	characterisations in terms of Kolmogorov complexity
	(Theorem~\ref{thm:KCCharnofCPhidim}) and $\Phi$-$s$-supergales
	(Theorem~\ref{thm:CdimPhiGales}).
\end{itemize}

\medskip
\noindent\textbf{Part II: Cantor covering dimension and faithfulness.}
In the second part, we apply the general theory developed in Part~I to
study the faithfulness of Cantor coverings.
\begin{itemize}
	\item In Section~\ref{sec:CantorDim}, we introduce constructive Cantor covering
	dimension. Using the results from Part~I, we
	derive a Kolmogorov complexity characterisation and establish a
	point-to-set principle for Cantor covering dimension.
	
	\item In Section~\ref{sec:CantorFaith}, we introduce the notions of
	Hausdorff and constructive dimension faithfulness. For constructive Cantor
	coverings, we characterise constructive dimension faithfulness via a
	log-limit condition on the Cantor series terms
	(Theorem~\ref{thm:EffFaithCondn}).
	
	\item In Section \ref{sec:HausdFaith}, we characterise the Hausdorff dimension faithfulness of Cantor coverings using the
	log-limit condition. Thereby we give an information theoretic proof of the result by Albeverio, Ivanenko, Lebid and Torbin  \cite{Albeverio2020} (Theorem~\ref{thm:AlbIvankio2}).
	
	\item In Section~\ref{sec:EqlFaith}, we study the problem of equivalence of faithfulness at constructive and Hausdorff levels.
	We show, independently of the log-limit condition, that
	for computable Cantor coverings, faithfulness with respect to
	constructive dimension is equivalent to faithfulness with respect to
	Hausdorff dimension
	(Theorem~\ref{thm:equivalenceOfFaithfulness}). 
\end{itemize}

\section{Preliminaries} \label{sec:Prelim}
\subsection{Notation}
We use $\mathbb{X}$ to denote the metric space under consideration with metric $d(x,y)$.
For a set $ U \subseteq \X$, we use $|U|$ to denote the diameter of
$U$, that is $|U| = \sup_{x,y \in U} d(x,y)$. 

We use $\emptyset$ to denote the empty set, and we assume $|\emptyset| = 0$.  We call two sets $U$ and $V$ incomparable if $U \not\subseteq V$ and $V \not\subseteq U$.

We use $\Sigma$ to denote the binary alphabet $\{0,1\}$, $\Sigma^*$
represents the set of finite binary strings, and $\Sigma^\infty$
represents the set of infinite binary sequences. We use $\lambda$ to denote the empty string.
We use $\lvert x
\rvert$ to denote the length of a finite string $x \in \Sigma^*$. For
an infinite sequence $X =X_0X_1X_2\dots$, we use $X \restr n$
to denote the finite string consisting of the first $n$ symbols of
$X$. When $n \geq m$ we also use the notation $X[m,n]$ to denote the
substring $X_mX_{m+1}\dots X_n$ of $X \in \Sigma^\infty$. 
$\N^*$ denotes a finite sequence of natural numbers $[a_1, a_2, \dots, a_n]$. 

We use $\N$ to denote the set of natural numbers (starting from 1). 
We use $\Q$ to denote the set of rational numbers and $\R$ to denote the set of reals.

Given infinite sequences $X_1,  \dots, X_n$, we define the
\emph{interleaved sequence} $X_1 \oplus X_2 \oplus \dots
X_{n}$ to be the interleaved sequence $X = X_1[0]X_2[0] \dots X_n[0]X_1[1]\dots X_n[1]\dots$. 
 We call a set of strings $\Rho \subset \Sigma^*$  prefix-free if there are no two strings $\sigma,\tau \in \Rho$ such that $\sigma$ is a proper prefix of $\tau$. Given $n \in \N$ we use $[n]$ to denote $\{0,1,\dots n-1\}$. 


\subsection{Hausdorff Dimension} \label{sec:HausdorffDimension}

The following definitions are originally given by Hausdorff \cite{Hausdorff1919}. We take the definitions from Falconer \cite{Falc03}. 

\begin{definition}  [Hausdorff \cite{Hausdorff1919}]
	Given a set $\F \subseteq \R^n$. A 
	collection of sets $\{U_i\}_{i \in \N}$, where for each $i \in \N$, $U_i \subseteq \X$  is
	called a \emph{$\delta$-cover of $\F$} if for all
	$i \in \N$, $|U_i| \leq \delta$ and $\F \subseteq
	\bigcup_{i\in\N}{U_i}$.
\end{definition}

\begin{definition}  [Hausdorff \cite{Hausdorff1919}]
	Given an $\F \subseteq \X$, for any $s>0$, define
	$$
	\mathcal{H}^s_\delta(\F) = \inf\left\{\sum_i |U_i|^s :
	\{U_i\}_{i\in\N} \text{ is a } \delta\text{-cover of } \F
	\right\}.
	$$
\end{definition}

As $\delta$ decreases, the set of admissible $\delta$ covers decreases.
Hence $\mathcal{H}^s_\delta(\F)$ increases.

\begin{definition}  [Hausdorff \cite{Hausdorff1919}]
	For $s \in (0,\infty)$, the \emph{s-dimensional Hausdorff outer 
		measure} of $\F$ is defined as:
	$$\mathcal{H}^s(\F) = \lim\limits_{\delta \rightarrow 0^+}
	\; \mathcal{H}^s_\delta(\F).$$
\end{definition}


Observe that for
any $t>s$, if $\mathcal{H}^s(\F) < \infty$, then
$\mathcal{H}^t(\F) = 0$ (see Section 2.2 in \cite{Falc03}).

Finally, we have the following definition of Hausdorff dimension.

\begin{definition} [Hausdorff \cite{Hausdorff1919}]
	For any $\F \subset \X$, the \emph{Hausdorff dimension} of $\F$ is defined as:
	$$\dim(\F) = \inf \{s\geq 0 : \mathcal{H}^s(\F) = 0 \}
	.$$
\end{definition}

\subsection{Constructive dimension of sequences}

Lutz \cite{Lutz2003b} defines the notion of effective (equivalently,
constructive) dimension of an infinite binary sequence
using the notion of lower semicomputable $s$-gales. 

\begin{definition}[Lutz \cite{Lutz2003b}]
	For $s \in \Rplus$, a binary $s$-gale is a function $d: \Sigma^{*}
	\to \Rplus$ such that $d(\lambda) < \infty$ and for all $w \in
	\Sigma^*$, $$ 2^s \cdot d(w) = \sum_{i \in \{0,1\}} d(wi).$$
\end{definition}

\begin{definition}[Lutz \cite{Lutz2003b}]
The \emph{success set} of a binary $s$-gale $d$ is $$S^\infty (d) = \left\{ X \in
\Sigma^\infty : \limsup \limits_{n\to\infty} d(X \restr n) =
\infty\right\}.$$
\end{definition}

For $\mathcal{F} \subseteq \Sigma^\infty$, $\mathcal{G}(\mathcal{F})$
denotes the set of all $s \in \Rplus$ such that there exists a lower
semicomputable (Definition \ref{def:lowerscfn}) binary $s$-gale $d$ with $\mathcal{F} \subseteq S^\infty (d)$.

\begin{definition}	[Lutz \cite{Lutz2003b}, Hitchcock \cite{Hitchcock2003c}]
	The \emph{constructive dimension} or \emph{effective Hausdorff
		dimension} of $\mathcal{F} \subseteq \Sigma^\infty$ is $$\cdim(\mathcal{F}) =
	\inf \mathcal{G}(\mathcal{F}).$$ 
	
	The constructive dimension of a
	sequence $X \in \Sigma^\infty$ is $\cdim(X) = \cdim(\{X\})$.
\end{definition}

The Kolmogorov complexity of a finite binary string is defined as the length of the shortest program (from a fixed prefix-free set) which produces the string. 

\begin{definition}
	The	Kolmogorov complexity of $\sigma \in \Sigma^*$ is defined as $K(\sigma)= \min\limits_{\pi \in \Sigma^*} \left\{ \lvert \pi \rvert : U(\pi)=\sigma \right\}$,	
	where $U$ is a fixed universal prefix-free Turing machine.
\end{definition}

Mayordomo \cite{Mayordomo02} extends the result by Lutz \cite{Lutz03} to give the following 
Kolmogorov complexity characterization of constructive dimension of infinite binary sequences.

\begin{theorem}[Lutz \cite{Lutz03}, Mayordomo \cite{Mayordomo02}] \label{lem:KCCharnofCdim}
	For any $X \in \Sigma^\infty$, $$\cdim(X) =
	\liminf\limits_{n\to\infty} \frac{K(X \restr n)}{n}.$$
\end{theorem}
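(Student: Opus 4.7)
The plan is to establish the two inequalities $\cdim(X) \leq \liminf_{n\to\infty} K(X \upharpoonleft n)/n$ and $\cdim(X) \geq \liminf_{n\to\infty} K(X \upharpoonleft n)/n$ separately, using the standard bridge between lower semicomputable $s$-gales on $\Sigma^*$ and lower semicomputable (semi)measures on $\Sigma^\infty$.

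For the $(\leq)$ direction, I would fix any rational $s$ strictly larger than $\alpha := \liminf_{n} K(X\upharpoonleft n)/n$ and exhibit a single lower semicomputable $s$-gale succeeding on $X$. The natural candidate is obtained by scaling the universal lower semicomputable continuous semimeasure $M$ on $\Sigma^\infty$: setting $d(w) = 2^{s|w|} M(w)$ gives a lower semicomputable $s$-supergale, which can be normalized to an $s$-gale by the standard savings-account construction. The coding theorem supplies $M(w) \geq 2^{-K(w) - O(1)}$, so $d(X \upharpoonleft n) \geq 2^{sn - K(X \upharpoonleft n) - O(1)}$, and this blows up along the infinitely many $n$ with $K(X\upharpoonleft n) < sn$ guaranteed by $s > \alpha$. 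Hence $s \in \mathcal{G}(\{X\})$, so $\cdim(X) \leq s$; letting $s \downarrow \alpha$ closes this direction.

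For the $(\geq)$ direction, fix any rational $s > \cdim(X)$ and pick a lower semicomputable $s$-gale $d$ with $X \in S^\infty(d)$, normalized so that $d(\lambda) = 1$. The plan is to run the gale-to-semimeasure correspondence in reverse: the function $\mu(w) = d(w) 2^{-s|w|}$ is a lower semicomputable measure on $\Sigma^\infty$, hence $\sum_{|w|=n} \mu(w) \leq 1$ for every $n$. Since $\limsup_n d(X\upharpoonleft n) = \infty$, for each threshold $k \in \N$ there is some $n = n(k)$, with $n(k) \to \infty$ as $k \to \infty$, satisfying $d(X\upharpoonleft n) > 2^k$. The set $A_{n,k} = \{w \in \Sigma^n : d(w) > 2^k\}$ is c.e.\ uniformly in $n, k$ via the enumeration of $d$, and the Kraft-style bound on $\mu$ forces $\#(A_{n,k}) < 2^{sn - k}$. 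Describing $X\upharpoonleft n$ by its index inside $A_{n,k}$ together with prefix-free codes for $n$ and $k$ yields $K(X\upharpoonleft n) \leq sn - k + O(\log n + \log k)$. Dividing by $n$ and letting $k \to \infty$ produces infinitely many $n$ along which $K(X\upharpoonleft n)/n \leq s + o(1)$, so $\liminf_n K(X\upharpoonleft n)/n \leq s$; letting $s \downarrow \cdim(X)$ finishes the argument.

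The main obstacle is the bookkeeping in the lower-bound direction: verifying the Kraft-style cardinality bound on $A_{n,k}$ from the (sub)measure property of $\mu$, and then charging the prefix-free encodings of $n$ and $k$ so that the total additive overhead remains $o(n)$ and survives division by $n$ inside the liminf. The upper-bound construction is much easier by comparison, amounting to a direct invocation of coding-theorem domination and a standard supergale-to-gale normalization once the scaling $d(w) = 2^{s|w|} M(w)$ is identified.
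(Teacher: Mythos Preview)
Your argument is correct and is essentially the standard Lutz--Mayordomo proof: the universal continuous semimeasure scaled by $2^{s|w|}$ for the upper bound, and a Kraft-type counting argument on the c.e.\ level sets of a successful $s$-gale for the lower bound. There is nothing to compare against here, however: the paper does not supply its own proof of this theorem but merely quotes it as a known preliminary result from \cite{Lutz03} and \cite{Mayordomo02}.
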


\subsubsection{Constructive Dimension in Euclidean space}

Lutz and Mayordomo \cite{Lutz2008a} define constructive dimension of a point $x \in \R^n$ in Euclidean space by considering the binary expansions of these points, and porting it over to the Cantor space $\Sigma^\infty$. They note that if one or more of the coordinates of $x$ have two binary expansions, $\cdim(x)$ is unaffected by how we choose between these binary expansions.

\begin{definition} [Lutz and 
	Mayordomo \cite{Lutz2008a}]
	For a real $x = (x_1,...x_n) \in \R^n$, let $S_1 \in \Sigma^\infty,\dots,S_n \in \Sigma^\infty$ respectively be one of the binary expansions of the fractional parts of each of the coordinates of $x$. Define $\mathrm{bin}(x) = S_1 \oplus S_2 \dots \oplus S_n$. 
\end{definition}

\begin{definition} [Lutz and 
	Mayordomo \cite{Lutz2008a}]
		For an $x \in \R^n$, define
		$$\cdim(x) = n \cdot \cdim(\mathrm{bin}(x)).$$
\end{definition}

 Lutz and 
Mayordomo \cite{Lutz2008a} gave the following 
Kolmogorov complexity characterization of constructive dimension of points in $\R^n$.

\begin{theorem}[Lutz and 
	Mayordomo \cite{Lutz2008a}] \label{thm:CdimusingK}
	For any $\x \in \R^n$, and $A \in \Sigma^\infty$ 
	$$\cdim^A(\x) =
	\liminf\limits_{r\to\infty} \frac{K^A_r(\x)}{r}.$$
	where $K_r^A(\x) = \min\limits_{q \in \Q^n}\{K^A(q) : d(x,q) < 2^{-r}\}$.  Here $d$ is the Euclidean metric in $R^n$.
\end{theorem}

\section{ Hausdorff $\Phi$-dimension }  \label{sec:PhiDim}

Hausdorff dimension is defined using the notion of $s$-dimensional outer measures, where a cover is taken as the of union of a collection of covering sets $\{U_i\}_{i \in \N}$. Here a covering set $U_i$ can be any arbitrary subset of the space (see Section \ref{sec:HausdorffDimension}).
We define the general notion of Hausdorff $\Phi$-dimension by
restricting the class of admissible covers to $\Phi$-covers, which are the union of sets from a family of covering sets $\Phi$. 

\subsection{Family of covering sets}

In this work, we consider a \emph{family of covering sets} which satisfy the properties given below.

\begin{definition} [Family of covering sets $\Phi$] \label{def:PhiOg} We
	consider a metric space $\X$. A countable
	family of sets $\Phi$, where for
	each $\U \in \Phi$, we have $U \subseteq \X$, is called a family of covering sets over $\X$ if the following property holds
	
	\begin{itemize}
		\item Fineness: For all $x \in \X$ and $\delta > 0$, there exists a $U \in \Phi$ such that $x \in \U$ and $|U| < \delta$. 
	\end{itemize}
\end{definition}

We now define the notion of a $\Phi$-cover of a set. 
\begin{definition} [$\Phi$-cover] \label{def:PhiCover}
	Let $\Phi$ be a family of covering sets over $\X$. A $\Phi$ - cover of a set $\F \subseteq \X$ is a countable collection of sets $\{U_i\}_{i\in\N} \subseteq \Phi$ such that $\{U_i\}_{i\in\N}$ covers $\F$, that is $\F \subseteq \bigcup\limits_{i\in\N} U_i$. 
	
\end{definition}

\subsection{ Hausdorff $\Phi$-dimension}
\label{sec:MeasureDimension}

We call a $\Phi$-cover of $\F$ a $\delta$-cover if the diameter of elements in the cover are less than $\delta$.

\begin{definition}
	Let $\Phi$ be a family of covering sets defined over
	$\X$. Given a set $\F \subseteq \X$, a 
	$\Phi$-cover $\{U_i\}_{i \in \N}$  of $\F$ is
	called a \emph{$\delta$-cover of $\F$ using $\Phi$} if for all
	$i \in \N$, $|U_i| \leq \delta$. 
\end{definition}

From the fineness property in Definition \ref{def:PhiOg}, we have that for any set $\F \subseteq \X$ and $\delta >0$, $\delta$-cover of $\F$ using $\Phi$ always exist. 

\begin{definition} 
	Given an $\F \subseteq \X$, for any $s>0$, we define
	\begin{align*}
		\mathcal{H}^s_\delta(\F, \Phi) = \inf\left\{\sum_i |U_i|^s :
		\{U_i\}_{i\in\N} \text{ is a } \delta\text{-cover of } \F
		\text{ using } \Phi\right\}.
	\end{align*}
\end{definition}

If for any $\delta$, $\delta$-covers of $\F$ using $\Phi$ do not exist, then $\mathcal{H}^s_\delta(\F, \Phi) = \infty$.

As $\delta$ decreases, the set of admissible $\delta$-covers using $\Phi$ decreases. Hence $\mathcal{H}^s_\delta(\F, \Phi)$ increases.

\begin{definition} 
	For $s \in (0,\infty)$, define the \emph{s-dimensional $\Phi$ outer 
		measure} of $\F$ as:
	$$\mathcal{H}^s(\F, \Phi) = \lim\limits_{\delta \rightarrow 0^+}
	\; \mathcal{H}^s_\delta(\F, \Phi).$$
\end{definition}


Observe that as with the case of classical Hausdorff dimension, for
any $t>s$, if $\mathcal{H}^s(\F, \Phi) < \infty$, then
$\mathcal{H}^t(\F, \Phi) = 0$ (see Section 2.2 in \cite{Falc03}).

Finally, we have the following definition of Hausdorff
$\Phi$-dimension.


\begin{definition} 
	For any $\F \subset \X$, the \emph{Hausdorff $\Phi$-dimension} of $\F$ is defined as:
	$$\dim_\Phi(\F) = \inf \{s\geq 0 : \mathcal{H}^s(\F, \Phi) = 0 \}
	.$$
\end{definition}

\section{Point-to-set principle for $\Phi$-dimension}  \label{sec:PSP}
J. Lutz and N. Lutz \cite{Lutz2018} introduce the point-to-set principle for Hausdorff dimension. 
This provides an information-theoretic characterization of Hausdorff dimension, and has been fruitful in solving problems in classical dimension using tools from information theory  \cite{Lutz2020}. We show a $\Phi$-dimensional analogue of the point-to-set principle.

\subsection{Kolmogorov Complexity of $\Phi$-coverings}
We define the notion of Kolmogorov complexity of a point $x$ at precision $r$ with respect to $\Phi$. We denote this using $K_r(x,\Phi)$.
Let $\delta : \N \to \Phi$ be a fixed bijective enumeration of the elements in $\Phi$.

\begin{definition}
	Given an $r \in \N$ and $\x \in \X$, define 
	$$K_r(\x, \Phi) =\min_{U \in \Phi} \{K(\delta^{-1}(U)) : \x \in \U \text{ and } |U| \leq 2^{-r}\}.$$
\end{definition}

\subsection{{Point-to-set principle for $\Phi$-dimension} }

We show that the Hausdorff $\Phi$-dimension of a set can be characterised using 
relative Kolomogorov complexities of points in the set.
The proof is an adaptation of the proof from \cite{Lutz2018}. The idea is to encode indices corresponding to the $s$-dimensional $\Phi$-null covers of a set and their sizes  into the minimising oracle $A$.

\begin{theorem}\label{thm:pspforPhidim}
	Let $\Phi$ be a family of  covering sets over $\X$. For all $\mathcal{F} \subseteq \X$, 
	$$\dim_\Phi(\mathcal{F}) = \min\limits_{A \subseteq \N} \; \sup\limits_{\x \in \mathcal{F}} \; \liminf_{r \rightarrow \infty} \frac{K_r^A(\x, \Phi)}{r}.$$
\end{theorem}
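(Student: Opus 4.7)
The plan is to establish the equality by proving the two matching inequalities separately, which are exactly Lemmas \ref{lem:PSPFirst} and \ref{lem:PSPSecond}; the theorem then follows by direct combination. I would structure the argument to mirror the classical Lutz--Lutz point-to-set principle, with the key technical adaptations coming from the family-of-covering-sets framework rather than from the underlying measure theory.

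For the $\geq$ direction, my approach is to construct an oracle $A$ that uniformly encodes near-optimal $\Phi$-covers: for each rational $s > \dim_\Phi(\F)$ and each precision level $r$, I would fix a $\Phi$-cover $\{\UU_i^{r,s}\}_{i \in \N}$ with $\sum_i |\UU_i^{r,s}|^s \leq 2^{-r}$ and record the indices of these sets in $A$. Then for each such $s$, I would exhibit an $A$-lower-semicomputable $\Phi$-$s$-supergale of the form $d = \sum_r 2^r d_{2r}$, where each $d_r$ concentrates its capital on the $r$-th encoded cover (defined by $d_r(U) = |U|^{-s} \sum_{V \in \{\UU_i^r\},\, V \subseteq U} |V|^s$). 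The exponential weights balance against the geometric decay of cover mass to keep the initial total finite, while growing fast enough to force success on every point of $\F$, yielding $\cdim_\Phi^A(\x) \leq s$ for all $\x \in \F$.

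For the $\leq$ direction, my approach is to convert a winning supergale into fine covers. Given any $A$ and any rational $s > \sup_{\x \in \F} \cdim_\Phi^A(\x)$, Theorem \ref{thm:CDimSetfromPoint} supplies an $A$-lower-semicomputable $\Phi$-$s$-supergale $d$ that succeeds on all of $\F$. For each $r \in \N$ and $\delta > 0$, I would take $\UU_r^\delta$ to be a maximal incomparable subfamily of $\{U \in \Phi : d(U) \geq 2^r/\delta^s\}$; the success of $d$ together with the fineness property of $\Phi$ makes $\UU_r^\delta$ a $\delta$-cover of $\F$, and the generalized Kraft inequality (Lemma \ref{lem:kolmogorovInequality}) bounds $\sum_{U \in \UU_r^\delta} |U|^s$ by a constant times $2^{-r}$, yielding $\mathcal{H}^s(\F,\Phi)=0$ upon sending $r \to \infty$ and then $\delta \to 0$.

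The main obstacle I anticipate is in the lower-bound direction: extracting an actual $\delta$-cover with controlled $s$-mass from the abstract success set of $d$. This step hinges crucially on the increasing monotonicity of $\Phi$ (to pass to a maximal incomparable subfamily without losing coverage of $\F$) and on the generalized Kraft inequality for $\Phi$-$s$-supergales. Neither ingredient is automatic in the purely metric setting, so their prior verification within the $\Phi$-framework is precisely what makes the classical Lutz--Lutz machinery port over with only notational adjustments; the remainder of the argument is then a routine adaptation.
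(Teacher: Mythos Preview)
Your proposal is correct and matches the paper's own argument essentially line for line: the paper proves the two inequalities as Lemmas \ref{lem:PSPFirst} and \ref{lem:PSPSecond} using exactly the oracle-encoded covers with the weighted supergale $d=\sum_r 2^r d_{2r}$ for the $\geq$ direction, and the threshold sets $\{U:d(U)\geq 2^r/\delta^s\}$ reduced to an incomparable subfamily together with Lemma \ref{lem:kolmogorovInequality} for the $\leq$ direction. The only minor remark is that in the $\leq$ direction the diameter bound $|U|<\delta$ comes from the Kraft inequality (since $d(U)|U|^s\leq c$ forces $|U|^s\leq c\delta^s/2^r$), not from the fineness axiom; otherwise your outline is identical to the paper's.
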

\begin{proof}
	
	By the definition of $\Phi$-dimension, for any rational $s > \dim_\Phi(\mathcal{F})$ and $k \in \mathbb{N}$, there exists a sequence of $\Phi$-covers $\{\U_i^{k,s}\}_{i \in \mathbb{N}}$ of $\mathcal{F}$ such that
	\(
	\sum_{i \in \mathbb{N}} |\U_i^{k,s}|^s \leq 2^{-k}.
	\)
	
	 Consider the oracle sequence $A \subseteq \mathbb{N}$ that, for each $i, k \in \mathbb{N}$ and rational $s > \dim_\Phi(\mathcal{F})$, encodes:
	\begin{itemize}
		\item $f(i,k,s) = \delta^{-1}(\U_i^{k,s})$, the index of the $i$-th cover set, and
		\item $g(i,k,s) = 2^{-r}$, where $r \in \mathbb{N}$ is the unique integer satisfying $2^{-r-1} < |\U_i^{k,s}| \leq 2^{-r}$.
	\end{itemize}

	We show that given oracle access to $A$, for any rational $s > \dim_\Phi(\mathcal{F})$ and	for any point $x \in \F$, ${K_r^A(\x, \Phi)} \leq rs + o(r)$ for infinitely many $r \in \N$.
	
	Given $m \in \mathbb{N}$, choose $k > ms$. Since $\sum_{i \in \mathbb{N}} |\U_i^{k,s}|^s \leq 2^{-k} \leq 2^{-ms}$, every set $U$ in the cover satisfies $|U| < 2^{-m}$.

	For any $x \in \F$, consider the following description of the set in $\{\U_i^{k,s}\}_{i \in \mathbb{N}}$ that contains $x$. We specify the diameter range (an $r$ such that $2^{-r-1} < |U| \leq 2^{-r}$) of the set, and an index $j$ within sets of that diameter range. 
	Since $\sum_{i \in \mathbb{N}} |\U_i^{k,s}|^s \leq 1$, for any fixed $r \in \mathbb{N}$ there are at most $2^{rs + s} $ sets $U$ in the cover satisfying $2^{-r-1} < |U|$. Hence each such set can be indexed using $rs + o(r)$ bits. 
	
	More precisely, consider the Turing machine $M$ with oracle access to $A$ that, on input $(r, j, s, k)$, proceeds as follows: iterate through all $i \in \mathbb{N}$ and use $A$ to check whether $g(i,k,s) = r$. Upon finding the $j$-th such index $i$, halt and output $f(i,k,s)$.
	The input $(r, j, s, k)$ is encoded using $rs + o(r)$ bits to specify $j$ and $o(r)$ bits to specify $r,s,k$. Therefore, for any $m \in \N$, we have that for some $r \geq m$, $K_r^A(x, \Phi) \leq rs + o(r)$.
	
	\medskip
	
	We now show the reverse direction of the inequality.
	For an $A \in \Sigma^\infty$, take a rational $s > \sup\limits_{\x \in \mathcal{F}} \; \liminf\limits_{r \rightarrow \infty} \frac{K_r^A(\x, \Phi)}{r}$. For all $x \in \F$, we have that for some $r \in \N$, there exists a $U \in \Phi$, with $|U| < 2^{-r}$ and $K^A(\delta^{-1}(U)) \leq sr$. Define a $\Phi$-cover $\UU$ as follows. For each $w \in \Sigma^*$, if $K^A(w) \leq rs$ for some $r \in \N$ such that $ |\delta(w)| \leq 2^{-r}$, then add $\delta(w)$ to $\UU$.
	It follows that $\F \subseteq \bigcup_{U \in \UU} U $.
	We also have $\sum_{U \in \UU_k} |U|^s \leq \sum_{w \in \Sigma^*} 2^{-K(w)} \leq 1$. The last inequality follows from the Kraft inequality. Therefore $\cdim_{\Phi}(\F) \leq s$.
\end{proof}

\subsection{Point-to-set principle for constructive dimension}

\begin{definition} [Dyadic Family of covers] \label{def:dyadic_family}
	Consider the Euclidean space $\X = \R^n$. The dyadic family of covers is the set of coverings $\Phi_B = \bigcup_{r\in\N}  \{[\frac{m_1}{2^r}, \frac{m_1+1}{2^r}] \times \dots \times [\frac{m_n}{2^r}, \frac{m_n+1}{2^r}] \}_{m_1,m_2 \dots ,m_n \in [2^r]}$. 
\end{definition}

Besicovitch \cite{Besicovitch52} gave the following $\Phi$-dimension characterization of Hausdorff dimension. 

\begin{lemma}[Besicovitch \cite{Besicovitch52}] \label{thm:dimeqdimB}
	For all $\F \subseteq \R^n$, we have
	$\dim(\F) = \dim_{\Phi_B}(\F).$
\end{lemma}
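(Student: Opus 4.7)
The plan is to establish $\dim(\F) = \dim_{\Phi_B}(\F)$ by proving the two inequalities separately, both through a direct comparison between the outer measures $\mathcal{H}^s(\F)$ and $\mathcal{H}^s(\F,\Phi_B)$.

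The first inequality, $\dim(\F) \leq \dim_{\Phi_B}(\F)$, is almost immediate: every $\delta$-cover of $\F$ using $\Phi_B$ is, a fortiori, a $\delta$-cover of $\F$ in the unrestricted sense, because dyadic cubes are particular subsets of $\R^n$ of the prescribed diameter. Hence $\mathcal{H}^s_\delta(\F) \leq \mathcal{H}^s_\delta(\F,\Phi_B)$ for every $s,\delta>0$, and passing to $\delta \to 0$ yields $\mathcal{H}^s(\F) \leq \mathcal{H}^s(\F,\Phi_B)$. Whenever the right-hand side is zero, so is the left, and taking infimum over such $s$ gives the inequality.

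For the reverse inequality, the plan is to show that a general $\delta$-cover can be replaced by a $\Phi_B$-cover of comparable mesh whose $s$-sum is blown up by only a multiplicative constant depending on $n$ and $s$. Let $\{U_i\}_{i \in \N}$ be a $\delta$-cover of $\F$; write $d_i := |U_i|$ and assume first $d_i > 0$. Choose $r_i \in \Z$ with $2^{-r_i-1} \leq d_i < 2^{-r_i}$. Then $U_i$ is contained in some axis-aligned cube of side $d_i$, which meets at most two dyadic intervals of length $2^{-r_i}$ in each coordinate direction, hence at most $2^n$ level-$r_i$ cubes of $\Phi_B$. Each such dyadic cube has diameter $\sqrt{n}\cdot 2^{-r_i} \leq 2\sqrt{n}\, d_i$. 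Replacing every $U_i$ by these at most $2^n$ dyadic cubes and collecting them over all $i$ produces a $\Phi_B$-cover $\{V_j\}$ of $\F$ of mesh at most $2\sqrt{n}\,\delta$, satisfying
\begin{equation*}
\sum_j |V_j|^s \;\leq\; 2^n (2\sqrt{n})^s \sum_i d_i^s .
\end{equation*}
Taking infima on both sides and letting $\delta \to 0$ gives $\mathcal{H}^s(\F,\Phi_B) \leq 2^n(2\sqrt{n})^s \mathcal{H}^s(\F)$. So $\mathcal{H}^s(\F) = 0$ forces $\mathcal{H}^s(\F,\Phi_B) = 0$, and $\dim_{\Phi_B}(\F) \leq \dim(\F)$ follows.

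The only mild obstacles are routine bookkeeping: sets $U_i$ with $d_i = 0$ (empty sets or singletons) must be absorbed by a single dyadic cube at an arbitrarily deep level so that their contribution to the $s$-sum is negligible, and the choice $r_i$ must be adjusted so that the resulting $\Phi_B$-cover still has diameters bounded by the prescribed mesh parameter. Conceptually the argument is that while $\Phi_B$ is a very restricted subfamily of all possible covering sets, it is sufficiently dense at every scale: passing from an arbitrary cover to a $\Phi_B$-cover incurs only a bounded combinatorial overhead ($2^n$ cubes per original set, a diameter blow-up of $2\sqrt{n}$), and such constants are invisible to dimension.
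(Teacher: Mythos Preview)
The paper does not supply its own proof of this lemma; it is stated with attribution to Besicovitch \cite{Besicovitch52} and used as a black box. Your argument is the standard net-measure comparison (cover each set in an arbitrary $\delta$-cover by at most $2^n$ dyadic cubes of comparable size), and it is correct as written; the bookkeeping about singletons and mesh blow-up is handled properly, and the constant $2^n(2\sqrt n)^s$ disappears at the level of dimension.
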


Similarly, we have the following $\Phi$-dimension characterization of Constructive dimension. 

\begin{lemma} [Lutz and Mayordomo \cite{Lutz2008a}]\label{thm:cdimeqcdimB}
	For all $\F \subseteq \R^n$, we have
	$\cdim(\F) = \cdim_{\Phi_B}(\F).$
\end{lemma}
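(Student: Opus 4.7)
The plan is to reduce the set-level equality to a pointwise equality and then exploit an explicit correspondence between dyadic cubes in $\R^n$ and interleaved binary expansions. By Lemma \ref{lem:lutzCdimSetPoint} and Theorem \ref{thm:CDimSetfromPoint}, both $\cdim(\F)$ and $\cdim_{\Phi_B}(\F)$ equal the supremum of the respective pointwise dimensions over $\x \in \F$, so it suffices to prove $\cdim(\x) = \cdim_{\Phi_B}(\x)$ for every $\x \in \R^n$. Using the identity $\cdim(\x) = n \cdot \cdim(X)$, where $X = S_1 \oplus \dots \oplus S_n$ is the interleaved binary expansion of the fractional part of $\x$, the task reduces to showing $\cdim_{\Phi_B}(\x) = n \cdot \cdim(X)$.

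The key structural observation is the computable bijection between $w \in \Sigma^{rn}$ and the dyadic cube $U_w \in \Phi_B$ whose coordinate intervals are obtained by deinterleaving $w$ into $n$ blocks of length $r$. Under this bijection the $\Phi_B$-representations of $\x$ correspond to the prefixes of length divisible by $n$ of $X$. Each cube at level $r$ has diameter $\sqrt{n}/2^r$, so the $\sqrt{n}$ factors cancel in the supergale inequality, which reduces to $d(U_w) \geq 2^{-s} \sum_{v \in \Sigma^n} d(U_{wv})$. This is exactly the $n$-fold iterate of the binary $s'$-supergale condition $d'(w) \geq 2^{-s'}(d'(w0) + d'(w1))$ when $s = ns'$, matching the factor of $n$ appearing in the identity $\cdim(\x) = n \cdot \cdim(X)$.

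For $\cdim_{\Phi_B}(\x) \leq n \cdot \cdim(X)$, pick any rational $s' > \cdim(X)$ and any lower semicomputable binary $s'$-supergale $d'$ succeeding on $X$, and define $d(U_w) := d'(w)$ whenever $|w|$ is a multiple of $n$. The $\Phi_B$-$(ns')$-supergale inequality for $d$ follows by iterating the binary condition $n$ times, and lower semicomputability is inherited via the computable bijection. For success, use the crude bound $d'(wu) \leq 2^{s' |u|} d'(w)$ (an immediate consequence of the binary supergale condition): this gives $d'(X \upharpoonleft m) \leq 2^{s'n} d'(X \upharpoonleft rn)$ for $rn \leq m < (r+1)n$, so $\limsup_m d'(X \upharpoonleft m) = \infty$ implies $\limsup_r d'(X \upharpoonleft rn) = \infty$, i.e., $d$ succeeds on $\x$. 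For the reverse inequality, start from a lower semicomputable $\Phi_B$-$s$-supergale $d$ succeeding on $\x$, and define $d' : \Sigma^* \to \Rplus$ by $d'(w) := d(U_w)$ when $n \mid |w|$ and otherwise, for $|w| = rn + k$ with $0 < k < n$, by
\[
d'(w) := 2^{-(s/n)(n-k)} \sum_{v \in \Sigma^{n-k}} d(U_{wv}).
\]
A direct check shows the binary $(s/n)$-supergale condition holds with equality at all transitions that do not end at a multiple of $n$, and reduces exactly to the $\Phi_B$-supergale inequality on $U_w$ at transitions from a multiple-of-$n$ length; lower semicomputability is preserved since $d'$ is a finite sum of computable values of $d$. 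Success of $d'$ on $X$ is immediate because $d'(X \upharpoonleft rn) = d(U_r)$ along the $\Phi_B$-representation $(U_r)_r$ on which $d$ succeeds.

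The principal obstacle is in the first direction: the binary $\limsup$ may a priori be witnessed at lengths that are not multiples of $n$, whereas a $\Phi_B$-supergale only "sees" levels that are multiples of $n$. The $n$-fold iterate of the supergale inequality, giving the factor $2^{s'n}$, is what propagates large values back to the nearest multiple-of-$n$ prefix and closes this gap. This mismatch in scale between "binary bits" and "cube levels" is also what forces the reparameterization $s \leftrightarrow s/n$ throughout, and is the structural feature that would need re-examination when attempting the analogous result for other families of computable coverings.
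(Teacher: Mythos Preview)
The paper does not supply its own proof of this lemma; it is quoted from Lutz and Mayordomo \cite{Lutz2008a} and used as a black box. Your argument is a correct and self-contained reconstruction: reduce to the pointwise statement via Lemma~\ref{lem:lutzCdimSetPoint} and Theorem~\ref{thm:CDimSetfromPoint}, then translate between binary $s'$-supergales on $\Sigma^*$ and $\Phi_B$-$(ns')$-supergales on dyadic cubes through the interleaving bijection $w \leftrightarrow U_w$. The cancellation of the $\sqrt{n}$ diameter factor, the $n$-fold iteration of the supergale inequality, the ``fill-in'' formula for intermediate lengths, and the back-propagation bound $d'(wu) \leq 2^{s'|u|} d'(w)$ to transfer success from arbitrary prefixes to multiples of $n$ are all verified correctly.

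Two minor remarks that do not affect correctness: (i) the paper defines binary constructive dimension via $s$-\emph{gales} (equality) rather than $s$-supergales; your use of supergales is fine by the standard equivalence, but you may wish to note it. (ii) The dyadic family $\Phi_B$ as written in Definition~\ref{def:dyadic_family} only covers $[0,1]^n$, so strictly speaking the statement for all of $\R^n$ needs the obvious integer-translation extension; your proof is unaffected once one works coordinatewise with fractional parts, as you already do.
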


From Theorem \ref{thm:pspforPhidim} for $\Phi_B$, we have the following point-to-set principle from \cite{Lutz2018} relating Hausdorff and constructive dimensions. 

\begin{corollary}[J. Lutz and N. Lutz  \cite{Lutz2018}]\label{thm:PSP}
	For all $\mathcal{F} \subseteq \R^n$, $$\dim(\mathcal{F}) =
	\min\limits_{A \subseteq \N} \; \sup\limits_{\x \in
		\mathcal{F}} \; \liminf\frac{K^A(X \restr n)}{n}$$
	where $X$ is a binary expansion of $x$.
\end{corollary}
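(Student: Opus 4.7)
The plan is to obtain the classical Point-to-Set Principle as a direct instantiation of the $\Phi$-dimensional version (Theorem \ref{thm:pspforPhidim}) at the dyadic family $\Phi_B$ of Definition \ref{def:dyadic_family}, and then use the two identification lemmas (Lemma \ref{thm:dimeqdimB} and Lemma \ref{thm:cdimeqcdimB}) to rewrite both sides of the resulting equality in terms of the usual Hausdorff and constructive dimensions.

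First, I would verify that $\Phi_B$ qualifies as a family of computable covering sets in the sense of Definition \ref{def:CompPhi}. The level-$r$ sets form a partition of $\R^n$ (up to boundaries) into dyadic cubes of side $2^{-r}$, so increasing monotonicity holds (each dyadic cube at level $r{+}1$ sits inside a unique dyadic cube at level $r$), and fineness is immediate since $|U| = \sqrt{n}\,2^{-r}$ shrinks to $0$. The diameters $|U_i^n|$ are rational numbers explicitly computable from the level, and the containment relation among dyadic cubes is trivially decidable from the integer coordinates, so the two computability conditions of Definition \ref{def:CompPhi} hold uniformly.

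Second, I would invoke Theorem \ref{thm:pspforPhidim} with $\Phi = \Phi_B$ to obtain
\begin{equation*}
\dim_{\Phi_B}(\mathcal{F}) \;=\; \min_{A \subseteq \N}\; \sup_{\x \in \mathcal{F}}\; \cdim^A_{\Phi_B}(\x).
\end{equation*}
On the left-hand side, Lemma \ref{thm:dimeqdimB} replaces $\dim_{\Phi_B}(\F)$ with $\dim(\F)$. On the right-hand side, I would apply the relativized form of Lemma \ref{thm:cdimeqcdimB}, noting that the proof of the unrelativized statement in \cite{Lutz2008a} relativizes verbatim: the translation between ordinary lower semicomputable supergales and lower semicomputable $\Phi_B$-supergales proceeds by converting bets on binary-expansion prefixes into bets on dyadic cubes and back, and every algorithm involved is given uniform oracle access to the same set $A$. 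This yields $\cdim^A_{\Phi_B}(\x) = \cdim^A(\x)$ pointwise, so substituting produces exactly the claimed equality.

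The only step that deserves some care is the relativization of Lemma \ref{thm:cdimeqcdimB}, since the lemma is quoted in the excerpt in unrelativized form; however, this is routine and not a genuine obstacle, because the Kolmogorov-complexity characterization from \cite{Mayordomo2018} (Definition in the preliminaries above) is manifestly oracle-uniform, and the equivalence between $\cdim$ and $\cdim_{\Phi_B}$ is proved in \cite{Lutz2008a} by relating binary-expansion $s$-supergales to dyadic $\Phi_B$-$s$-supergales via an effective and oracle-blind bijection on bets. Once that is noted, the corollary falls out mechanically from Theorem \ref{thm:pspforPhidim} and the two dyadic-identification lemmas.
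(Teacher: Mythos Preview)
Your proposal is correct and follows essentially the same route as the paper: instantiate Theorem \ref{thm:pspforPhidim} at $\Phi_B$ and then invoke Lemmas \ref{thm:dimeqdimB} and \ref{thm:cdimeqcdimB} to rewrite both sides. You are in fact more careful than the paper in flagging that Lemma \ref{thm:cdimeqcdimB} must be used in relativized form, a point the paper leaves implicit.
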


%
%
%
%

\section{Effective $\Phi$-dimension} \label{sec:EffPhi}

We effectivise the notion of $\Phi$-dimension using effective $\Phi$-covers of a set. Let $\delta : \N \to \Phi$ be a fixed enumeration of the elements in $\Phi$.

\begin{definition} [Effective $\Phi$-cover] \label{def:EffNullCover}
	
	For an $s \in \Rplus$, we say that a set $\F \subseteq \X$ has an effective $s$-dimensional $\Phi$-null cover if and only if there exists a Turing machine $M : \N \times \N \to \N$ such that:
	
	\begin{itemize}
		\item For all $r \in \N$, $\sum\limits_i \; \lvert \delta (M(i,r))\rvert^s$ $ \leq 2^{-r}$ and
		\item For all $r \in \N$, $\bigcup\limits_i \;  \delta (M(i,r))$ $ \supseteq \F$.
		
	\end{itemize}
	
\end{definition}

Note: If for any $i,r \in \N$, if $M(i,r)$ does not halt, we assume $\delta((M(i,r))) = \emptyset$ in Definition \ref{def:EffNullCover}.



\begin{definition} \label{def:EffDimbyCovers}
	For any $\F \subset \X$, the \emph{effective $\Phi$-dimension} of $\F$ is defined as:
	$$\cdim_\Phi(\F) = \inf \{s\geq 0 : \text{ there exists an effective $s$-dimensional $\Phi$-null cover of } \F \}
	.$$
\end{definition}

Note that Definition $\ref{def:EffNullCover}$ can be relativised with respect to an oracle $A \in \Sigma^\infty$. This is done by replacing the Turing machine in Definition  $\ref{def:EffNullCover}$ with a Turing machine with oracle access to $A$. We then say that $\F$ has an effective $s$-dimensional $\Phi$-null cover $\cdim_\Phi^A(\F)$ relative to oracle $A$. 

Relativising Definition \ref{def:EffDimbyCovers}, 

\begin{definition} \label{def:EffRelDimbyCovers}
	For $\F \subseteq \X$ and $A \in \Sigma^\infty$, the \emph{effective $\Phi$-dimension} of $\F$ relative to $A$ is defined as:
	$$\cdim_\Phi^A(\F) = \inf \{s\geq 0 : \text{ there exists an effective $s$-dimensional $\Phi$-null cover of } \F \text { relative to $A$} \}.$$
\end{definition}

We now give equivalent characterisations of effective $\Phi$-dimension using Kolmogorov complexity and $\Phi$-$s$-supergales. This shows that the notion of effective $\Phi$-dimension is robust.

\subsection{Covering sets with computable diameters}

Let $\Phi$ be a family of covering sets (Definition \ref{def:PhiOg}).
Let $\delta : \N \to \Phi$ be a fixed bijective enumeration of all the elements in $\Phi$.

\begin{definition} [Family covering sets with computable diameters] \label{def:CompPhi}
	Let $\Phi$ be a family of covering sets defined over $\X$. We say $\Phi$ is a family of covering sets with computable diameters if for all $w \in \N$, $|\delta(w)|$ is computable.
\end{definition}

When we say $|\delta(w)|$ is computable, we mean there exists a total Turing machine $M$ that  on input $w \in \N$ and $r \in \N$, outputs an $\ell \in \Q$ such that $||\delta(w)| - \ell | \leq 2^{-r}$.

We first show that the constructive $\Phi$-dimension of a set is the supremum of constructive $\Phi$-dimensions of points in the set. 

\begin{theorem} \label{thm:CDimSetfromPoint}
	Let $\Phi$ be a family of covering sets with computable diameters   over $\X$. For any $\F \subseteq \X$, we have 
	$$\cdim_\Phi(\F) = \sup\limits_{\x \in \F} \cdim_\Phi(\x).$$
\end{theorem}

\begin{proof}
	Let $s$ be a rational such that $s > \cdim_\Phi(\F)$. There exists an effective $s$-dimensional $\Phi$-null cover for $\F$. Therefore, for all points $x \in \F$, there exists an effective $s$-dimensional $\Phi$-null cover for $x$. Hence $\sup\limits_{\x \in \F} \cdim_\Phi(\x) \leq s$.
	Therefore, $\cdim_\Phi(\F) \geq \sup_{\x \in \F} \cdim_\Phi(\x)$. 
	
	In the other direction, consider any rational $s > \sup_{\x \in \F} \cdim_\Phi(\x)$. We show that there exists an effective $s$-dimensional $\Phi$-null cover for $\F$. 
	
	We consider a universal effective $s$-dimensional $\Phi$-null cover in the same manner as in Theorem 6.2.5 in \cite{Downey10}. Let $M_0, M_1, \dots$ be an effective listing of all single argument Turing machines. Given an $r \in \N$, for each $i \in \N$, 
	enumerate the outputs of $M_i(r+i)$, till the point $\sum_{w \in M_i(r+i)} |\delta(w)|^s > 2^{-(r+i)}$. Let $\{S_n^{i,r}\}_{n \in \N} $ be the corresponding list of strings. Consider $U_r = \bigcup_{i} \{S_n^{i,r}\}_{n \in \N}$.  We have $\sum_{w \in U_r} |\delta(w)|^s \leq 2^{-r}$. 
	
	We now show that $\cap_{r \in \N} \; U_r$ is the universal effective $s$-dimensional $\Phi$-null cover. For a $\mathcal{W} \subseteq \X$, let $M_j(r)$ be a machine that enumerates the effective $s$-dimensional $\Phi$-null covers of $\mathcal{W}$. As for all $r \in \N$, $\mathcal{W} \subseteq \bigcup_{w \in M_j(r+j)} \delta(w)$, we have that $\mathcal{W} \subseteq $ $\cap_{r \in \N} \; U_r$.

	Since for every point $\x \in \F$ there exists an effective $s$-dimensional $\Phi$-null cover of $x$,  the universal effective $s$-dimensional $\Phi$-null cover covers $\F$. Hence $\cdim_{\Phi}(\F) \leq s$.
\end{proof}

\subsection{ Effective $\Phi$-dimension using Kolmogorov complexity} 


We give an equivalent formulation of constructive $\Phi$-dimension 
of a point using Kolmogorov complexity. The proof is very similar to that of Theorem \ref{thm:pspforPhidim}.

\begin{theorem} \label{thm:KCCharnofCPhidim}
	Let  $\Phi$ be a family of covering sets with computable diameters, over the space $\X$. For any $\x \in \X$,
	
	$$\cdim_{\Phi}(\x) = \liminf_{r \rightarrow \infty} \frac{K_r(\x, \Phi)}{r}.$$
\end{theorem}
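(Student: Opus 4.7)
The plan is to combine the two immediately preceding lemmas, which between them establish the two inequalities whose conjunction is the stated equality. The upper bound $\cdim_\Phi(\x) \leq \liminf_{r\to\infty} K_r(\x,\Phi)/r$ is obtained by constructing, for any rational $s$ strictly exceeding the liminf together with auxiliary rationals $s > s' > s''$ placed between them, a lower semicomputable $\Phi$-$s$-supergale that succeeds on $\x$. The construction enumerates at each precision $k$ the rational points $q \in \Q^n$ with $K(q) < ks''$, distributes betting capital proportional to $|W|^{s'}$ across the maximal covering sets $W \in B_k(q)$ containing each such $q$, and sums these contributions over $k$. The finite intersection property bounds $\#(B_k(q))$ by a constant so the total capital stays finite, the membership-test and diameter-computability conditions make each $B_k(q)$ uniformly computable, and the existence of some $q \in \Q^n$ close to $\x$ at every precision witnessing the liminf ensures the supergale reaches arbitrarily large values along a representation of $\x$.

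The lower bound $\cdim_\Phi(\x) \geq \liminf_{r\to\infty} K_r(\x,\Phi)/r$ runs in the reverse direction. Given any rational $s > \cdim_\Phi(\x)$, fix a lower semicomputable $\Phi$-$s$-supergale $d$ succeeding on $\x$. For each pair $k,r \in \N$ one c.e.-enumerates the incomparable family $\UU_k^r$ of covering sets of diameter in $[2^{-r-1}, 2^{-r})$ on which $d$ attains value at least $2^{ks}$; the generalized Kraft inequality (Lemma \ref{lem:kolmogorovInequality}) bounds $\#(\UU_k^r)$ by $2^{(r-k)s + O(1)}$ and forces $k \leq r + O(1)$. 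Each element of $\UU_k^r$ therefore admits a description of length $\log k + \log r + rs + O(1) = rs + o(r)$, and the membership-test condition converts this into a rational point $q$ inside the set with $K(q) \leq rs + o(r)$. Because $d$ succeeds on $\x$, at arbitrarily large $k$ some covering set containing $\x$ lies in $\mathcal{V}_k^r$ for some $r \geq k$; a short case analysis on whether this set is itself in $\UU_k^r$ or sits above or below one of its members yields $K_r(\x,\Phi) \leq rs + o(r)$ for infinitely many $r$, hence $\liminf_r K_r(\x,\Phi)/r \leq s$.

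The main subtlety, already handled inside the two lemmas, is the bookkeeping needed to translate between a set-level supergale inequality and a point-level Kolmogorov bound. In the upward direction one must ensure that the $B_k(q)$ for low-complexity $q$ genuinely cover $\x$ at the scales witnessing the liminf; in the downward direction one must use the structure of $\Phi$ to extract, from each enumerated set in $\UU_k^r$, an honest rational point usable in the definition of $K_r(\x,\Phi)$. The structural hypotheses of Definition \ref{def:FiniteIntPhi}---density of rational points, finite intersection, and membership testability---are precisely what make both translations effective, so once the two lemmas are in hand the theorem follows immediately by taking the join of the two inequalities.
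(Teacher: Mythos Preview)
Your proposal is correct and matches the paper's approach exactly: the theorem is stated without a separate proof, as the immediate conjunction of the two preceding lemmas, and your summaries of those lemmas (the supergale built from the sets $B_k(q)$ for low-complexity rationals with parameters $s > s' > s''$, and the enumeration of $\UU_k^r$ bounded via the generalized Kraft inequality followed by the three-case comparability analysis) accurately reflect the paper's own arguments.
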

\begin{proof} 
		Take any any rational $s > \cdim_\Phi(x)$.  For any $k \in \mathbb{N}$, there exists a machine $M$ that outputs sequence of $\Phi$-covers $\{\U_i^{k,s}\}_{i \in \mathbb{N}}$ of $x$ such that
	\(
	\sum_{i \in \mathbb{N}} |\U_i^{k,s}|^s \leq 2^{-k}.
	\)
	
	The proof that $\liminf_{r \rightarrow \infty} \frac{K_r(\x, \Phi)}{r} \leq s$ follows as in the first part of the proof of Theorem~\ref{thm:pspforPhidim}, with the additional observation that the oracle $A$ can be computed. $f(.)$ can be computed using $M$, and $g(.)$ is computable as $\Phi$ is a family of covering sets with computable diameters.
	
	For the reverse direction, consider an $s > \liminf_{r \rightarrow \infty} \frac{K_r(\x, \Phi)}{r}$. Again, the proof that $ \cdim_\Phi(\mathcal{F}) \leq s$ follows as in the second part of the proof of Theorem~\ref{thm:pspforPhidim}, taking $\F = \{x\}$ and $A = \emptyset$. The $\Phi$-cover $\UU$ formed is effective as $\Phi$ is a $|\delta^{-1}(w)|$ is computable for all $w \in \Sigma^*$.
\end{proof}

\subsection{Effective $\Phi$-dimension using gales}

In this section, we give a characterisation of $\Phi$-dimension  using $\Phi$-$s$-supergales. A $\Phi$-$s$-supergale can be seen as a gambling strategy where the bets are placed on the covering sets from $\Phi$. The definitions in this subsection are adaptations from the setting of \emph{Nice covers} by Mayordomo \cite{Mayordomo2018}. 

Note crucially that our setting does not use the $c$-cover property in \cite{Mayordomo2018}, and therefore covers more classes $\Phi$. For example, the set of continued fraction covers is not a Nice cover. Mayordomo's goal was to generalise constructive dimension to general metric spaces and therefore Nice-covers are designed to preserve effective dimension. This is important as the aim of this work is to study faithfulness, that is if coverings preserve effective dimension or not.

\begin{definition} [Family of layerwise covering sets $\Phi$] \label{def:Phi} We
	consider the metric space $\X$. A
	countable family of sets $\Phi = \defPhi$, where for
	each $i \in \N ,n \in \N$, $\U_i^n \subseteq \X$, is called a \emph{layerwise family of covering sets} if it satisfies the following
	properties:
	
	\begin{itemize}	
		\item  Increasing Monotonicity: For every $n \in \N$, $U \in \{U_i^n\}_{i \in \N}$ and $m \leq n$, there is a unique $V \in \{U_i^m\}_{i \in \N}$ such that $U \subseteq V$. 
		
		\item Computable subsets: For every $n \in \N$, and $i \in \N$, the set $\{j \in \N : U_j^{n+1} \subseteq U_i^n\}$ is uniformly computable. 
		
		\item Fineness : Given any $\epsilon > 0$, and $\x \in \X$, there
		exists a $\U \in \Phi$ such that $|U| < \epsilon$ and $x \in \U$.
		
		\item Computable diameter: For all $i,n \in \N$, $|U_i^n|$ is computable.
	\end{itemize}
\end{definition}

When we say $|U_i^n|$ is computable, we mean there exists a total Turing machine $M$ that  on input $i,n,r \in \N$, outputs an $\ell \in \Q$ such that $||U_i^n| - \ell | \leq 2^{-r}$. The set $\{j \in \N : U_j^{n+1} \subseteq U_i^n\}$ is uniformly computable if there is a Turing machine which on input $i,j,n$ decides if $U_j^{n+1}  \subseteq U_i^n$.

Note that the number of sets $\{U_i^n\}$ in some level $n \in \N$ can also be finite and bounded by $m$. The definition still holds because in this case we take $U_j^n = \emptyset$ for $j > m$. From Increasing monotonicity property, it follows that all elements $\{U_i^n\}$ in a particular level $n \in \N$ are incomparable.

Note that the Increasing Monotonicity property is taken from \cite{Mayordomo2018}. For any $U$ in a layer, there is a unique $V$ in a layer above it, such that $U$ is a refinement of $V$ ($U \subseteq V$). 
This restriction enables us to meaningfully define $s$-supergales over $\Phi$. The covering sets in the same layer need not be disjoint. 
 
Note that the notion of layerwise family of covering sets (Definition \ref{def:Phi}) is a refinement of family of  covering sets with computable diameters (Definition \ref{def:CompPhi}). We take $\delta(<i,n>) = U_i^n$, where $<,>$ is the pairing function from $\N \times \N \to \N$. 

\begin{definition}[Mayordomo \cite{Mayordomo2018}] \label{def:superGale} 
	Let $\Phi = \defPhi$ be a family of layerwise covering sets from
	Definition \ref{def:Phi}. For $s \in \Rplus$, a $\Phi$-$s$-supergale
	is a function $d: \Phi \to \Rplus$ such that: 
	
	\begin{itemize}
		\item $\sum\limits_{U \in \{U_i^1\}_{i \in \N}} d(U) |U|^s < \infty$ and
		\item For all $n \in \N$ and all $\U \in \{\U_i^n\}_{i \in \N}$, the
		following condition holds:
		$$ d(\U)\cdot |\U|^s \geq \sum_{V \in  \{\U_i^{n+1}\}_{i \in \N}, V \subseteq U}  d(V) \cdot |V|^s\;.$$
	\end{itemize}
\end{definition}	

The following is the generalization of Kraft inequality for $s$-supergales from Mayordomo \cite{Mayordomo2018}.
\begin{lemma} [Generalisation of Kraft inequality \cite{Mayordomo2018}] \label{lem:kolmogorovInequality}
	Let $d$ be a $\Phi$-s-supergale. Then for every $\mathcal{E} \subseteq \Phi$ such that the sets in $\mathcal{E}$ are incomparable, we have that
	$$ \sum\limits_{V \in \mathcal{E}} d(V) |V|^s \leq \sum\limits_{U \in \{U_i^1\}_{i \in \N}} d(U) |U|^s.$$  
\end{lemma}

\begin{proof}
	
	Let $\sum\limits_{U \in \{U_i^1\}_{i \in \N}} d(U) |U|^s = c$ for some $c \in \R$. It suffices to show that the Lemma holds when sets in $\mathcal{E}$ are subsets of level sets till a finite level $N$. That is, there exists an $N \in \N$ such that for all $U \in \mathcal{E}$, $U \in \{U_i^m\}$ for some $m \leq N$.  We prove this using induction on $N$. The base case when $N = 1$ is immediate. Now assume the lemma holds for $N-1$, consider $\mathcal{E}_N = \{V \in \mathcal{E} \;\vert\; V \in \{U_i^N\}\}$ and $\mathcal{E}_{<N} = \mathcal{E} \setminus \mathcal{E}_N$. From the increasing monotonicity property in Definition \ref{def:Phi}, we have that for all $V \in \mathcal{E}_N$, $V \subseteq W$ for a unique $W \in \{U_i^{N-1}\}$. Let $\mathcal{E}_{N-1}'$ be the collection of such $W$'s. We have that elements in $\mathcal{E}_{<N} \cup \mathcal{E}_{N-1}'$ are incomparable and so $\sum\limits_{V \in \mathcal{E}_{<N} \cup \mathcal{E}_{N-1}'} d(V) |V|^s \leq c$. From the gale condition in Definition \ref{def:superGale},  for all $W \in \mathcal{E}_{N-1}'$, 
	$ d(W)  |W|^s \geq \sum_{V \in  \{\U_i^{N}\}, V \subseteq U}  d(V) |V|^s\;$. Therefore, $\sum\limits_{V \in \mathcal{E}} d(V) |V|^s \leq c$.		
\end{proof}

\begin{definition} [Mayordomo \cite{Mayordomo2018}] 	Let $\Phi = \defPhi$ be a family of layerwise covering sets.
	Given $\x \in \X$, a $\Phi$-representation of $\x$ is a sequence $(U_n)_{n \in \N}$ such that for each $n \in \N$, $U_n \in \{U_i^n\}_{i \in \N}$ and $\x \in \cap_n U_n$.
\end{definition}

Note that the same $\x$ can have multiple $\Phi$-representations. Given $\x \in \X$, let $\mathcal{R}(\x)$ be the set of $\Phi$-representations of $\x$.

\begin{definition} [Mayordomo \cite{Mayordomo2018}]
	A $\Phi$-s-supergale $d$ succeeds on $\x \in \X$ if there is a $(U_n)_{n \in \N} \in \mathcal{R}(\x)$ such that 
	$\limsup \limits_{n \rightarrow \infty} d(U_n) =\infty.$
\end{definition}

Equivalently, a $\Phi$-$s$-supergale $d$ succeeds on a point $\x \in \X$ iff for every $k \in \N$, there exists a $U \in \Phi$ such that $\x \in \U$ and $d(U) > 2^k$. 

\begin{definition} 
	The \emph{success set} of $d$ is $S^\infty (d) = \left\{ \x \in \X : d \text{ succeeds on \x} \right\}.$\end{definition}

We use constructive $\Phi$-$s$-supergales for constructive $\Phi$-dimension. For a  $\Phi$-$s$-gale $d$ to be constructive, we require the gale function $d$ to be lower semicomputable. Note that a lower semicomputable supergale actually takes as input $(i,n)$ where $i \in \N, n \in \N$ to place bets on $ U_i^n$. We omit this technicality in this paper and keep the domain of the gale as $\Phi$ for the sake of simplicity. 

\begin{definition} \label{def:lowerscfn}
	A function $d : \Phi \longrightarrow \Rplus$ is called
	\emph{lower semicomputable} if there exists a total computable
	function $\hat{d} : \Phi \times \mathbb{N} \longrightarrow
	\mathbb{Q} \cap \Rplus $ such that the following two
	conditions hold.
	\begin{itemize}
		\item \textbf{Monotonicity} : For all $\U \in \Phi$ and
		for all $n \in \mathbb{N}$, we have $ \hat{d}(\U,n) \leq
		\hat{d}(\U,n+1) \leq d(\U)$.
		\item \textbf{Convergence} : For all $U \in \Phi$,
		$\lim\limits_{n\to\infty} \hat{d}(\U,n) = d(\U)$.
	\end{itemize}
\end{definition}

\begin{definition}
	For $\mathcal{F} \subseteq \X$, let $\hat{\mathcal{G}}_{\Phi}(\mathcal{F})$
	denote the set of all $s \in \Rplus$ such that there exists a lower
	semicomputable $\Phi$-$s$-supergale $d$ with $\mathcal{F}
	\subseteq S^\infty (d)$.  
\end{definition}

We now give a $\Phi$-$s$-supergale characterisation of constructive $\Phi$-dimension for a family of layerwise covering sets $\Phi$.


\begin{lemma} \label{lem:CdimPhiGales1}
	Let $\Phi$ be a family of layerwise covering sets. For any  $\mathcal{F}
	\subseteq \X$, $$\cdim_{\Phi}(\mathcal{F}) \geq \inf
	\hat{\mathcal{G}}_{\Phi}(\mathcal{F}).$$
\end{lemma}

\begin{proof} 
	
	From Definition \ref{def:EffDimbyCovers},  for any rational $s > \cdim_\Phi(\mathcal{F})$ and $r \in \N$, there exists an effective sequence of covers $\{\U_i^r\}_{i \in \N}$ where each $\U_i^r \in \Phi$ such that $\sum_{i \in \N}|\U_i^r|^s \leq 2^{-r}$ and $\F \subseteq \bigcup_i \U_i^r$.
	
    Given $r \in \N$, consider the following $\Phi-$$s$-supergale $d_r : 
	\Phi \to \Rplus$  
	$$d_r(\U) = \frac{1}{|\U|^s} \;\left(\; { \sum_{{V} \in \{\U_i^r\} \; ; \; {V} \subseteq \U } |{V}|^s} \;\right).$$

	Note that $\sum\limits_{U \in \{U_i^1\}_{i \in \N}} d_r(U) |U|^s \leq  \sum_{i\in \N} |U_i^r|^s \leq 2^{-r}$. Since $|{V}|$ is computable, $d_r$ is lower semicomputable. It follows that for all $\U \in \{\U_i^r\}_{i\in \N}$, $d_r(\U) \geq 1$.  Finally define $d(\U) = \sum_{r=1}^{\infty}2^r \cdot d_{2r}(\U).$
	
	For any $r\in \N$, $\F \subseteq \bigcup_i \U_i^{2r}$. So, for all $x \in \F$ and  $r \in \N$, there exists a $\U \in \{\U_i^{2r}\}$ such that $x \in \U$. For that $\U$,  $d(\U) \geq 2^r d_{2r}(\U) \geq 2^r$. Therefore $\F \subseteq S^\infty(d)$ and so $\inf
	\hat{\mathcal{G}}_{\Phi}(\mathcal{F}) \leq s$.
\end{proof}

\begin{lemma} \label{lem:CdimPhiGales2}
	Let $\Phi$ be a family of layerwise covering sets. For any  $\mathcal{F}
	\subseteq \X$, $$\cdim_{\Phi}(\mathcal{F}) \leq \inf
	\hat{\mathcal{G}}_{\Phi}(\mathcal{F}).$$
\end{lemma}

\begin{proof} 
	Take any rational $s > \inf \hat{\mathcal{G}}_{\Phi}(\mathcal{F})$. There exists a lower semicomputable $\Phi$-$s$-supergale, say ${d}$ that succeeds on all $x \in \F$.
	
	Given an $r \in \N$, define 
	$\mathcal{U}_r = \{\U \in \Phi : 2^{-r-1} \leq |U| < 2^{-r} \text{ and } d(U) \geq 1\}.$
	
	In general, sets in $\mathcal{U}_r$ may not be incomparable. Therefore, we construct a new set $\mathcal{V}_r$ in the following manner. Once a new $U \in \mathcal{U}_r$ gets enumerated, we enumerate it into to $\mathcal{V}_r$ if and only if $U$ is incomparable with all $V$ currently enumerated into $\mathcal{V}_r$.
	Since the the sets in $\mathcal{V}_r$ are incomparable, from Lemma \ref{lem:kolmogorovInequality},  $\sum_{U \in \mathcal{V}_r} d(U) |U|^s < \infty$.
	
	Since $d(U) \geq 1$ and $|U| \geq 2^{-r-1}$, it follows that for some $c \in \N$, $\sum_{U \in \mathcal{V}_r} 2^{-rs} < c$. Therefore the number of elements in $\mathcal{V}_r$ is less than $2^{rs + o(r)}$.  
	
	Now consider the following set $\mathcal{W}_r$. Once a new $U \in \mathcal{V}_r$ gets enumerated, we find the largest $V \in \Phi$ such that $U \subseteq V$ and $2^{-r} \leq |V| < 2^{-r+1}$. We enumerate $V$ into $\mathcal{W}_r$. It follows that  the number of elements in $\mathcal{W}_r$ is less than $2^{rs + o(r)}$. Therefore given $r$, we need at most $rs+o(r)$ bits to index into a set within $\mathcal{W}_r$.	
	Therefore for all $U \in  \mathcal{W}_r$, $K(\delta^{-1}(U)) \leq rs + o(r)$.
	
	For any $x \in \F$, since ${d}$ succeeds on $x$, we have that for infinitely many $r \in \N$, $x \in U$ for some $U \in \mathcal{U}_r$. It follows that $U \subseteq V$ for some $V \in \mathcal{W}_r$.
	
	Therefore, for all $x \in \F$, for infinitely many $r \in \N$, $K_r(x, \Phi) \leq rs +o(r)$. 
	
	From Theorem \ref{thm:KCCharnofCPhidim}, we have $\cdim_{\Phi}(\F) \leq s$.
\end{proof}

From Lemma \ref{lem:CdimPhiGales1} and Lemma \ref{lem:CdimPhiGales2}, we have

\begin{theorem} \label{thm:CdimPhiGales}
	Let $\Phi$ be a family of layerwise covering sets. For any  $\mathcal{F}
	\subseteq \X$, $$\cdim_{\Phi}(\mathcal{F}) = \inf
	\hat{\mathcal{G}}_{\Phi}(\mathcal{F}).$$
\end{theorem}

\section{Constructive Cantor Covering Dimension} \label{sec:CantorDim}
\label{sec:equivalenceoffaithfulness}

In this section, we study the faithfulness of the family of coverings generated by computable Cantor series expansions.

\subsection{ Cantor coverings over the unit interval } 



Given a sequence $Q = \{n_k\}_{k \in \N}$ with $n_k \in \N \setminus
\{1\}$, the expression
$$x = \sum_{k = 1}^\infty \frac{\alpha_k}{n_1 \cdot n_2 \dots n_k}$$ where
$\alpha_k \in [n_k]$ is called the Cantor series expansion of the real
number $x \in [0,1]$  \cite{Cantor1932}.


\begin{definition}[Cantor coverings $\Phi_Q$]
	The class of Cantor coverings $\Phi_Q$ over the space $\X = [0,1]$ generated by the Cantor series expansion $Q = \{n_k\}_{k \in \N}$  is the set of 
	intervals $$\bigcup_{{k \in \N}} \{[\frac{m}{n_1 \cdot n_2 \dots n_k},
	\frac{m+1}{ n_1 \cdot n_2 \dots n_k}]\}_{m \in [ n_1 \cdot n_2 \dots
		n_k]}.$$ 
\end{definition}

For the class of Cantor coverings $\Phi_Q$ generated by the Cantor series expansion $Q = \{n_k\}_{k \in \N}$, we take $\delta(<k,m>) = [\frac{m-1}{ n_1 \cdot n_2 \dots n_k},
\frac{m}{n_1 \cdot n_2 \dots n_k}]$  when $m \leq n_1 \cdot n_2 \dots
n_k$. Otherwise $\delta(<k,m>) = \emptyset$. Here $<,>$ is the pairing function from $\N \times \N \to \N$.

\begin{definition}[Computable Cantor coverings]
	The Cantor series expansion $Q = \{n_k\}_{k \in \N}$ is said to be computable if there exists a machine that generates $n_k$ given $k$. We call the class of Cantor coverings $\Phi_Q$ generated by a computable Cantor series expansion $Q$ a class of \emph{computable Cantor coverings} over $\X = [0,1]$.
\end{definition}

\subsection{Point-to-set principle and Kolmogorov complexity characterization}

We use Theorem \ref{thm:KCCharnofCPhidim} to show a Kolmogorov complexity characterization of constructive $\Phi$-dimension for computable Cantor coverings, based on the terms appearing in the Cantor series expansion.


\begin{lemma}\label{lem:KCChanging}
	Let $\Phi_Q$ be a set of computable Cantor coverings 
	generated by $Q = \{n_k\}_{k \in \N}$. For all $x \in \X$ and $A \in \Sigma^\infty$,
	$$ \liminf_{r \rightarrow \infty} \frac{K_r^A(\x, \Phi)}{r} = \liminf\limits_{k\to\infty} \frac{K^A(X
		\restr m_k)}{m_k}$$ where $X$ is a binary
	expansion of $x$ and $m_k = \floor{\log_2(n_1 \cdot n_2 \dots
		n_k)}.$
\end{lemma}

\begin{proof}
	
	Consider a rational $s > \liminf_{r \rightarrow \infty} \frac{K_r^A(\x, \Phi)}{r}$. For infinitely many $r \in \N$, there exists a $U \in \Phi_Q$ such that $|U| \leq 2^{-r}$ and $x \in \U$ and $K^A(\delta^{-1}(U)) \leq sr$. Let $|U| = 1/(n_1 \cdot n_2 \dots n_k)$ for some $k \in \N$ and let $m_k = \floor{\log_2(n_1 \cdot n_2 \dots
		n_k)}$. We have that $2^{-(m_k+1)} < |U| \leq 2^{-m_k}$ and $m_k+1 > r$.
	
	We show that $K^A(X \restr m_k) \leq s\cdot m_k + o(m_k)$.  There are two dyadic intervals of size $2^{-m_k}$  that can cover $U$. We use the short description of $U$ to produce a short description of the dyadic interval of size $2^{-m_k}$ that contains $x$.  As $Q$ is computable, we can compute a $j$ such that $X \restr m_k = {2^{-m_k}} \cdot ({j+e})$ for some hardcoded $e \in \{0,1\}$. So we have
 $$K^A(X \restr m_k) \leq K^A(\delta^{-1}(U)) + o(m_k) \leq rs + o(m_k) \leq s\cdot m_k + o(m_k).$$ 
	
	Towards the other direction, assume that for infinitely many $k \in \N$, $K(X \restr m_k) \leq s\cdot m_k + o(m_k)$. We use the short description of $X \restr m_k$ to produce a short description of the Cantor covering $U \in \Phi_Q$ such that $x \in U$ and $|U| \leq 2^{-m_k}$. 
	
	There are at most three possibilities of $U \in \Phi_Q$ such that $2^{-(m_k+1)} < |U|$ and $x \in \U$, depending on $X \restr m_k$. As $Q$ is computable, we can find $<k,m>$ and hardcode $e \in \{0,1,2\}$ such that $\delta^{-1}(U) = <k,m+e>$.
	Therefore $K^A(\delta^{-1}(U)) \leq s\cdot m_k + o(m_k)$. As $|U| < 2^{-m_k}$, we have $K_r^A(x) \leq sr + o(r)$.
\end{proof}

Therefore for any computable Cantor covering $\Phi_Q$, we can characterise the Cantor covering dimension of a point as the limit infimum of the Kolmogorov complexities of its finite prefixes, along a subsequence $m_k$. The subsequence $m_k$ depends only on the terms in the cantor series $Q$. The proof follows from Lemma \ref{lem:KCChanging}, taking $A = \emptyset$.
\begin{theorem}\label{lem:KCCharnofPhidim}
	For any $x \in \X$, and any computable Cantor covering $\Phi_Q$
	generated by $Q = \{n_k\}_{k \in \N}$,
	$$\cdim_{\Phi_Q}(x) = \liminf\limits_{k\to\infty} \frac{K(X
		\restr m_k)}{m_k}$$ where $X$ is a binary
	expansion of $x$ and $m_k = \floor{\log_2(n_1 \cdot n_2 \dots
		n_k)}.$
\end{theorem}

From the point-to-set principle for $\Phi$-dimension (Theorem
\ref{thm:pspforPhidim}), we have the following point-to-set
principle for Cantor covering dimension.

\begin{theorem}\label{thm:pspforCantor}
	For all $\mathcal{F} \subseteq \X$ and for all 
	Cantor coverings $\Phi_Q$, $$\dim_{\Phi_Q}(\mathcal{F}) =
	 \min\limits_{A \subseteq \N} \; \sup\limits_{\x \in \mathcal{F}} \; \liminf\limits_{k\to\infty} \frac{K^A(X
	 	\restr m_k)}{m_k}$$ where $X$ is a binary
	 expansion of $x$ and $m_k = \floor{\log_2(n_1 \cdot n_2 \dots
	 	n_k)}.$
\end{theorem}
\begin{proof}
	The forward direction follows from proof of Theorem \ref{thm:pspforPhidim}. Into the oracle $A$, we also encode $h(k) = n_k$.
	Note that with oracle access to $A$, $\Phi_Q$ effectively becomes a set of computable cantor coverings. Therefore, we can apply the same techniques in the forward direction of proof of Lemma \ref{lem:KCChanging}, and shift from $K^A_r(x, \Phi)/r$ to $K^A(X
	\restr m_k)/m_k$. 
	
		The reverse direction of the inequlity proceeds along the same lines as proof of Theorem \ref{thm:pspforPhidim}. 
	For an $A \in \Sigma^\infty$, take an $s >$ $\sup\limits_{\x \in \mathcal{F}} $ $\; \liminf\limits_{k\to\infty} \frac{K^A(X
		\restr m_k)}{m_k}$. For all $x \in \F$, we have that for infinitely many $k \in \N$, $K^A(X \restr m_k) \leq s\cdot m_k$. 
	Define a $\Phi$-cover $\UU$ as follows. For each $k \in \N$, and for each
	$w \in \Sigma^{m_k}$, such that $K^A(w) \leq s \cdot m_k$, there are at most three sets $U \in \Phi_Q$ such that $X \restr m_k \in U$ and $ 2^{-{m_k+1}} < |U| \leq 2^{-m_k}$. Add them to $\UU$.
	It follows that $\F \subseteq \bigcup_{U \in \UU} U $.
	We also have $\sum_{U \in \UU_k} |U|^s \leq 3 \cdot \sum_{w \in \Sigma^*} 2^{-K^A(w)} \leq 3$. The last inequality follows from the Kraft inequality. Therefore $\cdim_{\Phi}(\F) \leq s$.
\end{proof}

Theorem \ref{lem:KCCharnofCdim} ensures that when the Kolmogorov complexities of any two $X, Y \in \Sigma^\infty$ align over all finite prefixes, their constructive dimensions become equal.
From Theorem \ref{lem:KCCharnofPhidim}, we get that when this happens, the constructive $\Phi$-dimensions for computable Cantor coverings
become equal.

\begin{lemma} \label{lem:KCsameMeansDimsSame}
	For any $x,y \in \X$, $A,B \subseteq \N$ and any class of
	computable Cantor coverings $\Phi$, if for all n, $|K^A(X \restr
	n) - K^B(Y \restr n)| = o(n)$, then $\cdim^A(x) =
	\cdim^B(y)$ and $\cdim^A_\Phi(x) = \cdim^B_\Phi(y)$. Here $X$ and $Y$ are the binary expansions of $x$ and $y$ respectively. 
\end{lemma} 

\section{Constructive Faithfulness of Cantor Coverings} \label{sec:CantorFaith}

In this section, we define the notion of faithfulness of a class of coverings towards constructive and Hausdorff Dimension. For Cantor coverings, we characterise the constructive dimension faithfulness using a log-limit condition of the terms appearing in the Cantor series expansion. 

Then, using the point-to-set principle and properties of
Kolmogorov complexity, we show that when the class of covers $\Phi$ is
generated by computable Cantor series expansions, the faithfulness at
the Hausdorff and constructive levels are equivalent notions.

\subsection{Faithfulness of family of coverings}
We will first see the definition of Hausdorff dimension faithfulness.
We then introduce the corresponding notion at the effective level,
which we call constructive dimension faithfulness.

A family of covering sets $\Phi$ is said to be \emph{faithful} with respect
to Hausdorff dimension if the $\Phi$ dimension of every set in the
space is the same as its Hausdorff dimension.

\begin{definition}
	A family of covering sets $\Phi$ over the space $\X$ is said to be \emph{faithful with respect to Hausdorff dimension} if for all
	$\F \subseteq \X$, $\dim_\Phi(\F) = \dim(\F)$.
\end{definition}


We extend the definition to the constructive level as well. A family of  covering sets with computable diameters $\Phi$ is defined to be \emph{faithful}
with respect to constructive dimension if the constructive $\Phi$
dimension of every set is the same as its constructive dimension.

\begin{definition}
	A family of  covering sets with computable diameters $\Phi$ is
	said to be \emph{faithful with respect to constructive
		dimension} if for all $\F \subseteq \X$, $\cdim_\Phi(\F) =
	\cdim(\F)$.
\end{definition}


The following lemma follows from Theorem \ref{thm:CDimSetfromPoint}.
It states that constructive dimension faithfulness can be equivalently stated in terms of preservation of constructive dimensions of points in the set. 

\begin{lemma}\label{lem:CPhidimSetPoint}
	A family of  covering sets $\Phi$ with computable diameters is
	{faithful with respect to constructive
		dimension} if and only if for all $\x \in \X$, $\cdim_\Phi(\x) =
	\cdim(\x)$.
	
\end{lemma}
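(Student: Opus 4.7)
The plan is to prove both directions of the biconditional separately, with each direction reducing almost entirely to applying one of the two quoted results in a mechanical way.

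For the forward direction, assume that $\Phi$ is faithful with respect to constructive dimension, so $\cdim_\Phi(\F) = \cdim(\F)$ for every $\F \subseteq \X$. I simply specialize this equation to the singleton set $\F = \{\x\}$ for an arbitrary $\x \in \X$. By the convention of Definition \ref{def:CdimPhi}, $\cdim_\Phi(\x) = \cdim_\Phi(\{\x\})$, and analogously $\cdim(\x) = \cdim(\{\x\})$, so the faithfulness assumption immediately gives $\cdim_\Phi(\x) = \cdim(\x)$. No further work is needed for this direction.

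For the backward direction, assume that $\cdim_\Phi(\x) = \cdim(\x)$ for every $\x \in \X$. Fix an arbitrary $\F \subseteq \X$. The idea is to express both $\cdim_\Phi(\F)$ and $\cdim(\F)$ as suprema of the corresponding point-wise dimensions over $\F$, and then use the pointwise hypothesis termwise. Specifically, Theorem \ref{thm:CDimSetfromPoint} gives the identity $\cdim_\Phi(\F) = \sup_{\x \in \F} \cdim_\Phi(\x)$, and Lemma \ref{lem:lutzCdimSetPoint} gives $\cdim(\F) = \sup_{\x \in \F} \cdim(\x)$. Since $\cdim_\Phi(\x) = \cdim(\x)$ for every $\x \in \F$ by assumption, the two suprema coincide, yielding $\cdim_\Phi(\F) = \cdim(\F)$ for the arbitrary $\F$, which is exactly faithfulness.

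There is no real obstacle here: the lemma is essentially a bookkeeping consequence of the fact that both the classical and the $\Phi$-versions of constructive dimension are determined by their values on points via the same supremum operation. The only minor thing to be careful about is invoking Theorem \ref{thm:CDimSetfromPoint}, which requires that $\Phi$ be a family of computable covering sets (and not merely a family of covering sets); this is already built into the statement by its reference to a "constructive family of covers," so no extra hypothesis has to be verified.
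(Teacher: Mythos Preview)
Your proof is correct and follows exactly the approach indicated in the paper, which simply states that the lemma follows from Theorem \ref{thm:CDimSetfromPoint} and Lemma \ref{lem:lutzCdimSetPoint}. You have filled in precisely the routine details that the paper omits.
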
 


The following lemma states that the $\Phi$-dimension of a set is always greater than or equal to its Hausdorff dimension. Similarly, the constructive $\Phi-$dimension of a set is always greater than or equal to its constructive dimension. The proofs follow from the respective Definitions.

\begin{lemma} \label{lem:PhidimGeqdim}
	For any family of covering sets $\Phi$ over $\X$, for all $\F \subseteq \X$, 
	$\dim_\Phi(\F) \geq \dim(\F)$.
\end{lemma}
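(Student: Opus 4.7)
The plan is to observe that any $\Phi$-cover is also a legitimate cover in the classical Hausdorff sense, so restricting the class of admissible covers from all subsets of $\X$ to only sets in $\Phi$ can only shrink the family of admissible $\delta$-covers, and hence can only increase the infimum that defines the outer measure.

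More precisely, first I would fix $\F \subseteq \X$, $s > 0$, and $\delta > 0$. Any $\delta$-cover of $\F$ using $\Phi$ is a collection $\{U_i\}_{i\in\N} \subseteq \Phi$ with $|U_i| \leq \delta$ and $\F \subseteq \bigcup_i U_i$; since each $U_i \subseteq \X$, this is also a $\delta$-cover of $\F$ in the sense of the classical Hausdorff definition. Thus the family of $\delta$-covers of $\F$ using $\Phi$ is a subset of the family of all $\delta$-covers of $\F$, and taking the infimum of $\sum_i |U_i|^s$ over the smaller family gives a value at least as large:
\begin{equation*}
\mathcal{H}^s_\delta(\F) \leq \mathcal{H}^s_\delta(\F, \Phi).
\end{equation*}
The fineness property of $\Phi$ guarantees that $\delta$-covers using $\Phi$ always exist, so the right-hand side is finite (or at least well-defined) whenever the left-hand side is.

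Next, letting $\delta \to 0$ on both sides, monotonicity of the limits yields $\mathcal{H}^s(\F) \leq \mathcal{H}^s(\F, \Phi)$. Consequently, whenever $\mathcal{H}^s(\F, \Phi) = 0$ we also have $\mathcal{H}^s(\F) = 0$, so
\begin{equation*}
\{ s \geq 0 : \mathcal{H}^s(\F, \Phi) = 0 \} \subseteq \{ s \geq 0 : \mathcal{H}^s(\F) = 0 \}.
\end{equation*}
Taking infima, the infimum of the smaller set is at least the infimum of the larger one, which gives $\dim_\Phi(\F) \geq \dim(\F)$.

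There is essentially no obstacle here; the argument is a pure inclusion-of-covers argument and uses only the definitions of $\mathcal{H}^s$, $\mathcal{H}^s(\cdot,\Phi)$, and the fact that $\Phi$-covers are a special case of arbitrary covers. The fineness axiom in Definition~\ref{def:Phi} is invoked only to ensure that admissible $\Phi$-covers exist, so that the infimum defining $\mathcal{H}^s_\delta(\F, \Phi)$ is not taken over an empty set.
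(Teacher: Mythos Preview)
Your proposal is correct and is exactly the standard inclusion-of-covers argument one would expect here. The paper in fact states Lemma~\ref{lem:PhidimGeqdim} without proof (the proof that follows it is for the next lemma, Lemma~\ref{lem:PhidimGeqCdim}), so your argument is precisely the elementary justification the paper omits as routine.
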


\begin{lemma} \label{lem:PhidimGeqCdim}
	For any family of covering sets with computable diameters $\Phi$ over $\X$, for all $\F \subseteq \X$, 
	$\cdim_\Phi(\F) \geq \cdim(\F)$.
\end{lemma}

Therefore  $\Phi$ is not faithful for Hausdorff dimension if and only if there exists an $\F \subset \X$ such that $\dim_{\Phi}(\F) > \dim(\F)$. Similarly, $\Phi$ is not faithful for constructive dimension if and only if there exists an $\F \subset \X$ such that $\cdim_{\Phi}(\F) > \cdim(\F)$.

\subsection{ Constructive faithfulness of Cantor coverings}

We show that the constructive dimension faithfulness of Cantor coverings can be determined using the terms $n_k$ in $Q$.

\begin{lemma} \label{lem:EffFaithCondn1}
	A family of computable Cantor coverings $\Phi_Q$ generated by $Q =
	\{n_k\}_{k \in \N}$ is faithful with respect to constructive
	dimension if  	
	\begin{align*}
		\lim_{k\rightarrow \infty} \frac{\log n_k}{\log n_1 \cdot n_2
			\dots n_{k-1}} = 0.
	\end{align*}
\end{lemma}

\begin{proof}
	Given any $x \in [0,1]$, we have that $\cdim_{\Phi_Q}(x)  = \liminf\limits_{k\to\infty} \frac{K(X
		\restr m_k)}{m_k}$ where $X$ is a binary
	expansion of $x$ and $m_k = \floor{\log_2(n_1 \cdot n_2 \dots
		n_k)}.$ Given $X \restr m_k$, and for any $\ell \in \N$, $X \restr m_k + \ell$ can be produced by supplying the additional $(m_{k} + \ell - m_k)$ bits of $X \restr m_k + \ell$. Therefore, $$K(X \restr m_{k} + \ell) \leq K(X \restr m_k) + \ell  + o(m_{k} + \ell).$$
	 Also $X \restr m_k$ can be computed given $X \restr m_k + \ell$ and $m_k$ by trimming to $m_k$ length. Therefore, $$K(X \restr m_k) \leq K(X \restr m_k + \ell) + O(\log m_k).$$
	
	 If $\lim_{k\rightarrow \infty} \frac{\log n_k}{\log n_1 \cdot n_2
		\dots n_{k-1}} = 0$, we have that $\lim_{k\rightarrow \infty} \frac{m_{k+1} - m_{k}}{m_{k}} = 0$. 
	
	Therefore, for all $\ell_k \leq m_{k+1} - m_k$, we have $\lim_{k\rightarrow \infty} \frac{ \ell_k }{m_{k}} = 0$ 
	
	Now,  \begin{align*}
		 \frac{K(X \restr m_{k} + \ell_k)}{m_{k} + \ell_k} - \frac{K(X \restr m_{k})}{m_{k}} &\leq  \frac{K(X \restr m_k) + \ell_k  + o(m_{k} + \ell_k)}{m_{k} + \ell_k} - \frac{K(X \restr m_{k})}{m_{k}} \\
		 &\leq \frac{\ell_k}{m_{k}} + \frac{ o(m_{k} + \ell_k)}{m_{k}+\ell_k}.
	\end{align*}

	And, \begin{align*}
		 \frac{K(X \restr m_{k})}{m_{k}} - \frac{K(X \restr m_{k} + \ell_k)}{m_{k} + \ell_k} &\leq  \frac{K(X \restr m_k + \ell_k) + o(m_k)}{m_{k}} - \frac{K(X \restr m_{k} + \ell_k)}{m_{k} + \ell_k} \\
		&\leq \bigg(\frac{K(X \restr m_{k} + \ell_k)}{m_{k} + \ell_k} \cdot \frac{\ell_k}{m_k} \bigg) + \frac{ o(m_{k} )}{m_{k}} \\
		&\leq \frac{\ell_k}{m_k} + \frac{ o(m_{k} )}{m_{k}}.
	\end{align*}

	From this, it follows that
		$\lim_{k \to \infty} \frac{K(X \restr m_{k} + \ell_k)}{m_{k} + \ell_k} - \frac{K(X \restr m_{k})}{m_{k}}  = 0$ when $\ell_k \leq m_{k+1} - m_k$.
	
	 Therefore,  for all $x \in [0,1]$, $\cdim_{\Phi_Q}(x) = \cdim(x)$.

\end{proof}

\begin{lemma} \label{thm:EffFaithCondn2}
	A family of computable Cantor coverings $\Phi_Q$ generated by $Q =
	\{n_k\}_{k \in \N}$ is faithful with respect to constructive
	dimension only if  	
	\begin{align*}
		\lim_{k\rightarrow \infty} \frac{\log n_k}{\log n_1 \cdot n_2
			\dots n_{k-1}} = 0.
	\end{align*}
\end{lemma}
\begin{proof}

	Fix a rational $0 < s < 1$. For a Cantor covering $\Phi_Q$ which does not satisfy the condition, we construct an $X \in \Sigma^\infty$ such that $\cdim(X) < s$ and $\cdim_{\Phi_Q}(X) = s$.

	 For all $k \in \N$, let $m_k = \floor{\log_2(n_1 \cdot n_2 \dots
		n_k)}.$ Let $s = p/q$ for some $p,q \in \N$. Let $R_s \in \Sigma^\infty$ be the sequence obtained by diluting $p$ bits of a Martin-L\"of random with $q-p$ bits of $0$'s. We see that $\lim_{n\to \infty} K(R_s \restr n)/n = s$. Let $R$ be a Martin-L\"of random independent of $R_s$.
	
	Let $\limsup_{k\rightarrow \infty} \frac{ m_{k+1} - m_{k}}{m_k} > \epsilon $ for some rational $\epsilon > 0$. Let $\{k_i\}_{i \in \N}$ be the corresponding indices such that for all $i \in \N$, ${ m_{k_i+1}} > (1 + \epsilon) \cdot {m_{k_i}}$. 
	

	We construct $X$ stage wise ensuring that the following requirements are satisfied. 
	
	\begin{enumerate}
		\item For all $k \in \N$, $K(X\restr m_{k}) = s \cdot m_{k} \pm o(m_k).$
		\item For some $\epsilon'>0$, for all $i \in \N$, 
		some $m_{k_i} < m_{k_i}'' < m_{{k_i}+1}$, $$K(X \restr m_{k_i}'') \leq (s - \epsilon') \cdot m_{k_i}''.$$
	\end{enumerate}
	
	Condition 1 guarantees that $\cdim_{\Phi}(X) = s$. Condition 2 guarantees that $\cdim(X) \leq s - \epsilon'$, thereby ensuring that $\Phi$ is not faithful for constructive dimension.

	\begin{figure}[h]
		\centering
		\begin{tikzpicture}  \label{figure}[ 
			>=stealth,
			thick,
			x=0.8cm,
			every node/.style={font=\small}
			]
			\usetikzlibrary{decorations.pathreplacing}
			
			\draw[->] (0,0) -- (12,0) node[right] {$ $};
			
			\coordinate (A) at (1,0);   
			\coordinate (B) at (3,0);   
			\coordinate (C) at (5.5,0); 
			\coordinate (D) at (8.5,0); 
			\coordinate (E) at (11,0);  
			
			\foreach \P/\lab in {A/$m_{k_{i-1}+1}$, B/$m_{k_i}$, C/$g_i$, D/$h_i$, E/$m_{k_i+1}$}
			\draw (\P) -- ++(0,0.25) node[above] {\lab};
			
			\foreach \P/\Q in {A/Ashift, B/Bshift, C/Cshift, D/Dshift, E/Eshift}
			\path (\P) ++(0,-0.4) coordinate (\Q);
			
			
			\draw[decorate,decoration={brace, mirror,amplitude=4pt}]
			(Ashift) -- (Bshift)
			node[midway,below=4pt] {$R_s$};
			
			\draw[decorate,decoration={brace, mirror,amplitude=4pt}]
			(Bshift) -- (Cshift)
			node[midway,below=4pt] {$0$};
			
			\draw[decorate,decoration={brace, mirror,amplitude=4pt}]
			(Cshift) -- (Dshift)
			node[midway,below=4pt] {$R$ };
			
			\draw[decorate,decoration={brace, mirror,amplitude=4pt}]
			(Dshift) -- (Eshift)
			node[midway,below=4pt] {$R_s$};
			
		\end{tikzpicture}
		\caption{Filling of $X$ during stage $i$.}
	\end{figure}
	
	We provide the stage wise construction of $X$. We start with $X_0 = \lambda$. We describe the extension of $X$ at the $i^{th}$ stage. At the beginning of this stage, the bits of $X$ till $m_{k_{i-1} + 1} $ have already been fixed.
	 Fill $X[m_{k_{i-1} + 1}  , m_{k_i} - 1]$ using the next bits of $R_s$. 	
	This ensures that at the start of stage $i$, Condition 1 is met for all $k \in k_{i-1} + 1, \dots, k_{i}$.  
	

	We are given that  ${m_{k_i+1}} > (1 +  \epsilon) {m_{k_i}}$. Let $h_i = \floor{(1 + \epsilon) m_{k_i}}$. 
	
	We choose a value $g_i \in \N$ (which we fix soon) such that $m_{k_i} < g_i < h_i$. Fill $X[m_{k_i} : g_i]$ with $0$ and $X[g_i : h_i]$ from the next bits of the Martin-L\"of random source $R$. 
	
	
	We first analyse $K(X \restr h_i)$.
	
	Using the symmetry of information of Kolmogorov complexity (see \cite{LiVitanyi}), upto $\log(|x|.|y|)$ error, $K(x,y) = K(x) + K(y|x)$. If $x$ and $y$ are independent, we have $K(x,y) = K(x) + K(y)$.

	Therefore upto $o(m_{k_i})$ terms, 
	\begin{align*}
		K(X \restr h_i) &= K(X \restr m_{k_{i}}) +  K(X[m_{k_i} : g_i]) + K(X[g_i : h_i])\\
		& = h_i - g_i + s \cdot m_{k_i}.
	\end{align*}
	We choose $g_i$ such that $h_i - g_i = s (h_i - m_{k_i})$. This ensures that $K(X \restr h_i) = s \cdot h_i$. 
	
	Finally, we set $X[h_i : m_{k + 1}]$ using the subsequent bits of $R_s$, ensuring that Condition 1 is satisfied at the end of stage $i$.
	
	Analysing the Kolmogorov complexity at $g_i$, we have
	$$K(X \restr g_i) \leq s\cdot m_{k_i} + o(g_i).$$
	
	Substituting the value of $g_i$ and ignoring $o(.)$ terms, we get
	\begin{align*}
		\frac{K(X \restr g_i)}{g_i} &\leq \frac{s \cdot m_{k_i}}{h_i(1-s) + s \cdot m_{k_i}} \\
		&= \frac{s \cdot m_{k_i}}{m_{k_i}(s + (1 + \epsilon) (1- s))}\\
		&= \frac{s}{1  + \epsilon (1 - s)}.
	\end{align*}

	Since $\epsilon(1-s) > 0$,  Condition 2 is satisfied.
	
	Therefore, we have $\cdim(X) < s$ and $\cdim_{\Phi_Q}(X) = s$. 
\end{proof}

\begin{theorem} \label{thm:EffFaithCondn}
	A family of computable Cantor coverings $\Phi_Q$ generated by $Q =
	\{n_k\}_{k \in \N}$ is faithful with respect to constructive
	dimension if and only if 	
	\begin{align}
		\label{log_limit_condition}
		\lim_{k\rightarrow \infty} \frac{\log n_k}{\log n_1 \cdot n_2
			\dots n_{k-1}} = 0.
	\end{align}
\end{theorem}

The Cantor series expansion is a generalization of the
base-$b$ representation, which is the special case when $n_k = b$ for
all $k \in \N$. That is $Q_b = \{b\}_{n\in\N}$. Since the condition in
Theorem \ref{thm:EffFaithCondn} is satisfied by $Q_b$ for any $b \in
\N$, we have the following result by Hitchcock and Mayordomo about the
base invariance of constructive dimension. 

\begin{corollary}[Hitchcock and Mayordomo \cite{HitchMayor13}]
	For any $\x \in [0,1]$ and $k,l \in \N \setminus \{1\}$,
	$\cdim_{(k)}(\x) = \cdim_{(l)}(\x).$ where $\cdim_{(k)}(\x)$
	represents the constructive dimension of $\x$ with respect to
	its base-$k$ representation.
\end{corollary}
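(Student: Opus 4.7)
The plan is to derive the base invariance of constructive dimension as an immediate consequence of Theorem~\ref{thm:EffFaithCondn}, applied to the constant Cantor series expansion. My first step is to identify the base-$b$ representation with a special case of the Cantor series setup: when $n_j = b$ for every $j$, the cylinders $[m/(n_1 \cdots n_j),\,(m+1)/(n_1 \cdots n_j)]$ become exactly the base-$b$ intervals $[m/b^j,(m+1)/b^j]$. Hence, by definition, $\cdim_{(b)}(\x) = \cdim_{\Phi_{Q_b}}(\x)$ where $Q_b = \{b\}_{n \in \N}$.

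The only substantive step is to verify that every $Q_b$ with $b \in \N \setminus \{1\}$ satisfies the log-limit hypothesis of Theorem~\ref{thm:EffFaithCondn}. Since $n_1 \cdot n_2 \cdots n_{k-1} = b^{k-1}$, one has
\begin{equation*}
\frac{\log n_k}{\log(n_1 \cdot n_2 \cdots n_{k-1})} = \frac{\log b}{(k-1)\log b} = \frac{1}{k-1},
\end{equation*}
which tends to $0$ as $k \to \infty$. Therefore Theorem~\ref{thm:EffFaithCondn} yields that $\Phi_{Q_b}$ is faithful with respect to constructive dimension for every integer $b \geq 2$.

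Applying this conclusion with $b = k$ and $b = l$ and chaining through the faithfulness identifications gives, for every $\x \in [0,1]$,
\begin{equation*}
\cdim_{(k)}(\x) \;=\; \cdim_{\Phi_{Q_k}}(\x) \;=\; \cdim(\x) \;=\; \cdim_{\Phi_{Q_l}}(\x) \;=\; \cdim_{(l)}(\x),
\end{equation*}
which is the corollary. There is essentially no obstacle here: all of the technical work has already been done in Theorem~\ref{thm:equivalenceOfFaithfulness} (equivalence of the two notions of faithfulness for computable Cantor coverings) and in the adaptation of the Albeverio--Ivanenko--Lebid--Torbin characterization captured by Theorem~\ref{thm:EffFaithCondn}; the present corollary reduces to the trivial observation that the constant sequence $\{b\}_{n \in \N}$ satisfies the log-limit condition vacuously.
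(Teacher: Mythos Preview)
Your proposal is correct and matches the paper's own argument essentially verbatim: the paper also observes that base-$b$ is the special case $Q_b = \{b\}_{n\in\N}$ of the Cantor series expansion, notes that $Q_b$ satisfies the log-limit condition of Theorem~\ref{thm:EffFaithCondn}, and concludes faithfulness (hence base invariance) immediately. Your explicit computation $\frac{\log b}{(k-1)\log b} = \frac{1}{k-1} \to 0$ and the chain of equalities through $\cdim(\x)$ simply spell out what the paper leaves implicit.
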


Note that condition (\ref{log_limit_condition}) classifies the Cantor series expansions on the basis of constructive dimension faithfulness. As an example, when $n_k = {2^k}$, condition (\ref{log_limit_condition}) holds, and therefore $Q = \{2^k\}_{k \in \N}$ is faithful for constructive dimension. However,  when $n_k = {2^{2^k}}$,  condition (\ref{log_limit_condition}) does not hold, and therefore $Q = \{2^{2^k}\}_{k \in \N}$ is not faithful for constructive dimension.

\section{Hausdorff dimension faithfulness of Cantor coverings}\label{sec:HausdFaith}

We give an alternate, information theoretic proof of the result by Albeverio, Ivanenko, Lebid and Torbin \cite{Albeverio2020} that the Hausdorff dimension faithfulness of Cantor coverings can be determined using the same log-limit condition over the terms $n_k$ in $Q$.

We first show that if the log-limit condition holds for $Q$, then $\Phi_Q$ is faithful for Hausdoff dimension.

\begin{lemma}[Albeverio, Ivanenko, Lebid and Torbin  \cite{Albeverio2020}] \label{lem:AlbIvankio1}
	
	A family of  Cantor coverings $\Phi_Q$ generated by $Q = \{n_k\}_{k \in \N}$ is faithful with respect to Hausdorff dimension if 
	$$\lim\limits_{k\rightarrow \infty} \frac{\log n_k}{\log n_1 \cdot n_2
		\dots n_{k-1}} = 0.$$
\end{lemma}
\begin{proof}
	Suppose the log-limit condition $\lim\limits_{k\rightarrow \infty} \frac{\log n_k}{\log n_1 \cdot n_2
		\dots n_{k-1}} = 0$ holds. 
	Using the relativised version of proof of Lemma \ref{lem:EffFaithCondn1},  
	we have that for all $X \in \Sigma^\infty$ and $A \in \Sigma^\infty$, for any sequence  $\ell_k \leq m_{k+1} - m_k$, 
	$$\lim_{k \to \infty} \frac{K^A(X \restr m_{k} + \ell_k)}{m_{k} + \ell_k} - \frac{K^A(X \restr m_{k})}{m_{k}}  = 0.$$
	
	Therefore, we have that $\liminf\limits_{k \to \infty}{\frac{K^A(X \restr m_{k})}{m_{k}}} = \liminf\limits_{n \to \infty} \frac{K^A(X \restr n )}{n}$.	
	
	Using the point-to-set principles for Hausdorff dimension and $\Phi_Q$ (Corollary \ref{thm:PSP} and Theorem \ref{thm:pspforCantor}), we have that for all $\F \subseteq \X$, $\dim(\F) = \dim_{\Phi_Q}(\F)$.

\end{proof}

	Now suppose the log-limit condition  does not hold for $Q$. For any rational $0 < s < 1$, we show the existence of a set $\F \subseteq [0,1]$ such that $\dim_{\Phi_Q}(\F) = s$ and $\dim(\F) < s$.

\begin{definition}
As in proof of Lemma \ref{thm:EffFaithCondn2}, since the log-limit condition does not hold, for some rational $\epsilon > 0$, there exists a sequence of indices $\{k_i\}_{i \in \N}$ such that for all $i \in \N$, ${ m_{k_i+1}} > (1 + \epsilon) \cdot {m_{k_i}}$.
\begin{itemize}	
\item Let $h_i = \floor{(1 + \epsilon) m_{k_i}}$. 
\item Let $g_i = h_i + s (h_i - m_{k_i})$. 
\end{itemize}
\end{definition}

\begin{definition}
Let $0 < s < 1$ be a rational such that $s = p/q$ for the smallest $p,q \in \N$.	
Let $\F_s\subseteq[0,1]$ be the set of all points $x$ that satisfy the following conditions.  Let $X$ be a binary expansion of $x$. 
\begin{enumerate}
	\item For every position $j$, except those lying between $m_{k_i}$ and $h_i$, if $j \mod q  \geq p$, we have $X[j] = 0$.  
	\item Additionaly, at all positions $j$ such that  $m_{k_i} \leq j \leq g_i$, $X[j] = 0$.
\end{enumerate}
\end{definition}

In any $X \in \F$, except between indices that fall between, $m_{k_i}$ and $h_i$, $0$ is fixed at $q - p$ out of $q$ bits (Condition 1). Therefore, we only need $s$ fraction of bits to specify the bits in these positions. Even at indices between $m_{k_i}$ and $h_i$, positions between $m_{k_i}$ and $g_i$ are $0$'s. This along with the choice of $g_i$, gives that only $s$ fraction of the bits are needed to represent $K(X \restr m_k)$, for all $k$. 

\begin{lemma} \label{lem:dimQF_slts}
	$\dim_{\Phi_Q}(\F_s) \leq s$
\end{lemma}
\begin{proof}
	Assume some $k \in \N$, $K(X \restr m_k) \leq s \cdot m_k + o(m_k)$. In first case, $k$ is such that $k = k_i$ for some $i \in \N$. We have $K(X \restr m_{k_i + 1}) \leq K(X \restr m_{k_i}) + K(X[m_{k_i} : h_i]) + K(X[h_i : m_{k_i+1}]) + o(m_{{k_i}+1}) \leq s \cdot m_{k_i+1} + o(m_{k_i+1})$. 
	
	Otherwise,  $k \neq k_i$ for all $i$. In this second case, again $K(X \restr m_{k+1}) \leq K(X \restr m_k) + K(X[m_k : m_{k+1}]) + o(m_{k+1}) \leq s \cdot m_{k+1} + o(m_{k+1})$. Therefore, we have that for all $X \in \N$, for all $k \in \N$, $K(X \restr m_k) \leq s \cdot m_k  +o(m_k)$.
	
	From the point-to-set principle  for $\Phi_Q$-dimension (Theorem \ref{thm:pspforCantor}), we have $\cdim_{\Phi_Q}(\F) \leq s$.
\end{proof}

For the reverse direction, take an oracle $A$, and let $R$ be a Martin-L\"of random relative to $A$. We consider a sequence $X$ in which bits outside $I$ in are filled from $R$.
So, even with access oracle $A$, bits outside $I$ have maximal information density. 

Using symmetry of information and a case-wise analysis, we show that the positions $X \restr m_k$ have a density of information $s$. This along with the point-to-set principle for $\Phi_Q$ shows that $\dim_{\Phi_Q}(\F_s) \geq s$.

\begin{lemma}\label{lem:dimQF_sgts}
		$\dim_{\Phi_Q}(\F_s) \geq s$
\end{lemma}
\begin{proof}
	
	For any $A \in \Sigma^\infty$, let $R$ be a Martin-L\"of random relative to $A$. Let $X$ be the sequence for which the positions unfixed with 0 (indices outside $I$), are filled from bits of $R$. That is
	$X \in \F$ is such that $X[j] = R[j]$ for all $j \not\in I$ and $X[j] = 0$ for all $j \in I$.

		We first show that
		for all $k \in \N$, $K^A(X_A\restr m_{k}) = s \cdot m_{k} \pm o(m_k).$
		
	From the symmetry of information of Kolmogorov complexity \cite{Downey10}, we have for any $x, y \in \Sigma^*$, upto $o(|x| + |y|)$ terms, $K^A(x,y) = K^A(x) + K^A(x | y)$.
	If $x$ and $y$ are disjoint substrings of $X$, by design of $X$, they are independent, and so upto $o(|x| + |y|)$ terms, $K^A(x,y) = K^A(x) + K^A(y)$.

	Assume that for some $k \in \N$, $K^A(X_ \restr m_k) = s\cdot m_k + o(m_k)$.  In the first case, $k$ is such that $k = k_i$ for some $i \in \N$. Using symmetry of information, we have 
	$K^A(X \restr h_i) = K^A(X \restr m_{k_i}) + K^A(X[m_{k_i} : h_i]) = s \cdot m_{k_i} + h_i - g_i = s \cdot h_i$. Again,  upto $o(m_{k_i + 1})$ terms, $K^A(X \restr m_{k_i + 1}) = K^A(X \restr h_i) + K^A(X [h_i : m_{k_i + 1}]) = s \cdot h_i + s (m_{k_i + 1} - h_i) = s \cdot m_{k_i + 1}$. 
	
	In the second case, $k \neq k_i$ for all $i$. In this case, we directly have $K^A(X \restr m_{k+1}) = K^A(X \restr m_k) + K(X[m_k : m_k+1]) = s \cdot m_{k} + s (m_{k+1} - m_k) = s \cdot m_{k+1}$.
	
	Therefore, we have that for all $A \in \Sigma^\infty$, there exists a $X \in \F$ such that for all $k \in \N$, $K^A(X \restr m_k) = s \cdot m_k + o(m_k)$. 
	
	From the point-to-set principle  for $\Phi_Q$-dimension (Theorem \ref{thm:pspforCantor}), we have $\cdim_{\Phi_Q}(\F) \geq s$.
\end{proof}

	We now show that due to the $0$'s between $m_{k_i}$ and $g_i$, the Kolmogorov complexity of $X \in \F_s$ dips strictly below $s$, at $g_i$. Applying the point-to-set principle for Hasudroff dimension, we show that Hausdorff dimension of $\F_s$ is strictly below $s$.
	
	\begin{lemma}\label{lem:dimF_slts}
		$\dim(\F_s) < s$
	\end{lemma}
	\begin{proof}
		In the proof of Lemma \ref{lem:dimQF_slts}, we have established that for  all $X \in \F$ and $k \in \N$, $K(X \restr m_k) \leq s \cdot m_k +o(m_k)$. As $X[m_{k_i} : g_i]$, are all $0$, we have
		$K(X \restr g_i) \leq s\cdot m_{k_i} + o(g_i).$
		
		Just as in proof of Lemma \ref{thm:EffFaithCondn2}, substituting the value of $g_i$ and ignoring $o(.)$ terms, we get
		\begin{align*}
			\frac{K(X \restr g_i)}{g_i} &\leq \frac{s \cdot m_{k_i}}{h_i(1-s) + s \cdot m_{k_i}} \\
			&= \frac{s \cdot m_{k_i}}{m_{k_i}(s + (1 + \epsilon) (1- s))}\\
			&= \frac{s}{1  + \epsilon (1 - s)}.
		\end{align*}
		
		Since $0<s<1$, we have that for all $i \in \N$ and $X \in \F_s$, 
		$$\frac{K(X \restr g_i)}{g_i} < s.$$
		From the point-to-set principle for Hausdorff dimension (Corollary \ref{thm:PSP}),  $\dim(\F_s) < s$.
	\end{proof}

	From Lemma \ref{lem:AlbIvankio1}, we have that if the log-limit condition holds for $Q$, $\Phi_Q$ is faithful for Hausdorff dimension. From Lemma \ref{lem:dimQF_sgts} and Lemma \ref{lem:dimF_slts}, we have that if the log-limit condition does not hold for $Q$, $\Phi_Q$ is not faithful for Hausdorff dimension.
	 
	\begin{theorem}[Albeverio, Ivanenko, Lebid and Torbin  \cite{Albeverio2020}] \label{thm:AlbIvankio2}
		
		A family of  Cantor coverings $\Phi_Q$ generated by $Q = \{n_k\}_{k \in \N}$ is faithful with respect to Hausdorff dimension if and only if $$
		\lim\limits_{k\rightarrow \infty} \frac{\log n_k}{\log n_1 \cdot n_2
			\dots n_{k-1}} = 0.$$
	\end{theorem}

\section{Equivalence of Faithfulness of Coverings at Constructive and Hausdorff Levels} \label{sec:EqlFaith}

In this section, we investigate whether the notions of constructive faithfulness and classical faithfulness are equivalent for all covers $\Phi$ with computable diameters over the Euclidean space $\R^n$. We formulate this as a conjecture.

\begin{conjecture} \label{conj}
	Let $\Phi$ be a family of covering sets with computable diameters over $\R^n$. 
	
	Then $\Phi$ is  faithful for constructive dimension if and only if $\Phi$ is faithful for Hausdorff dimension.
\end{conjecture}

 Existing evidence supports this possibility. 
 For instance, constructive Cantor series dimension the same log-limit condition characterizes faithfulness at both the classical and constructive levels (Theorem \ref{thm:EffFaithCondn} and Theorem \ref{thm:AlbIvankio2}).
 Also, continued fraction covers are not faithful for constructive dimension \cite{Nandakumar2023b}, or Hausdorff dimension \cite{PeresTorbin}. 
 
 Towards proving such a general statement, we investigate the structural reasons that might underly such an equivalence. The rest of the paper is dedicated to proving the equivalence of faithfulness for constructive Cantor coverings through an abstract argument, independent of the log-limit characterization of faithfulness.

\subsection{Kolmogorov Complexity Construction} \label{sec:KCConstrn}
We first give a technical construction which is crucial in proving the results in this section. This construction may also be of independent interest.

We show that given an infinite sequence $X$ and an oracle $A$, for any oracle $B$, there exists a sequence $Y$ whose relativised Kolmogorov complexity (of prefixes) with respect to $B$ is similar to the relativised Kolmogorov complexity (of prefixes) of $X$ with respect to $A$. 

\begin{theorem}\label{thm:KCChasinglemma}
	For all $X\in\Sigma^\infty$ and $A\in\Sigma^\infty$, for every $B\in\Sigma^\infty$ there exists $Y\in\Sigma^\infty$ such that for all $n\in\mathbb N$,
	\[
	\bigl|K^{A}(X\upharpoonright n)-K^{B}(Y\upharpoonright n)\bigr|=o(n)
	\qquad\text{and}\qquad
	\cdim^{B}(Y)=\cdim(Y).
	\]
\end{theorem}

\begin{proof}
	Let $k(n):=K^{A}(X\restr n)$. We construct $Y$ in stages as a concatenation of blocks.
	Set $\ell_m:=m$ and $n_m:=\sum_{i=1}^m \ell_i=m(m+1)/2$.
	Inductively define strings $\tau_m\in\Sigma^{n_m}$ by $\tau_0=\lambda$ and $\tau_m=\tau_{m-1}\rho_m$ for suitable $\rho_m\in\Sigma^{\ell_m}$; finally let $Y=\lim_{m\to\infty}\tau_m$.
	
	Fix $m\ge 1$ and put $\Delta_m:=k(n_m)-k(n_{m-1})$. Since $n_m=n_{m-1}+\ell_m$, we have
	\[
	k(n_m)=K^{A}(X\restr(n_{m-1}+\ell_m))\le K^{A}(X\restr n_{m-1})+\ell_m+O(\log \ell_m),
	\]
	obtained by describing $X\restr n_{m-1}$ and then appending the next $\ell_m$ bits literally (plus a self-delimiting code for $\ell_m$). Thus $\Delta_m\le \ell_m+O(\log \ell_m)$.
	Also $k(n_{m-1})\le k(n_m)+O(1)$ because $X\restr n_{m-1}$ is uniformly computable from $X\restr n_m$ (truncate to length $n_{m-1}$, which is computable from the output length $n_m$), hence $\Delta_m\ge -O(1)$.
	Define
	\[
	t_m:=\max\{0,\min\{\Delta_m,\ell_m-1\}\}\in[0,\ell_m-1],
	\]
	so that $\Delta_m-t_m=O(\log\ell_m)$.
	
	Choose $u_m\in\Sigma^{t_m}$ such that $K^{B}(u_m\mid\tau_{m-1})\ge t_m-O(1)$.
	This exists by counting: among the $2^{t_m}$ strings of length $t_m$, fewer than $2^{t_m-1}$ can satisfy
	$K^{B}(\cdot\mid\tau_{m-1})\le t_m-2$ since there are too few prefix-free programs of length $\le t_m-2$.
	Now define
	\[
	\rho_m:=u_m\,1\,0^{\ell_m-t_m-1}\in\Sigma^{\ell_m}.
	\]
	The marker $1$ followed only by zeros makes $u_m$ uniformly computable from $\rho_m$ alone (scan from the end to the last $1$). Therefore
	\[
	K^{B}(\rho_m\mid\tau_{m-1}) = t_m+O(1).
	\]
	Moreover, oracle access cannot increase complexity by more than an additive constant, so
	$K(u_m\mid\tau_{m-1})\ge K^{B}(u_m\mid\tau_{m-1})-O(1)\ge t_m-O(1)$,
	and since $u_m$ is computable from $\rho_m$ we also obtain
	\[
	K(\rho_m\mid\tau_{m-1}) = t_m+O(1).
	\]
	
	Let $c_{sym}^A$ and $c_{sym}^B$ be the constants from the asymptotic error term in the symmetry of information of Kolmogorov complexity with oracles $A$ and $B$ respectively (see Theorem 3.10.2 in \cite{Downey10}).
	Applying symmetry of information to the concatenation $\tau_m=\tau_{m-1}\rho_m$ gives
	\[
	K^{B}(\tau_m)=K^{B}(\tau_{m-1})+K^{B}(\rho_m\mid\tau_{m-1})+O(\log \ell_m),
	\]
	where the hidden constant in $O(\log\ell_m)$ is bounded by $c_{sym}^B$. Hence
	\[
	K^{B}(\tau_m)=K^{B}(\tau_{m-1})+t_m+O(\log \ell_m).
	\]
	Define $E_m:=K^{B}(\tau_m)-k(n_m)$. Using $k(n_m)=k(n_{m-1})+\Delta_m$ and $\Delta_m-t_m=O(\log\ell_m)$, we get
	\[
	E_m = E_{m-1}+O(\log\ell_m).
	\]
	Therefore $|E_m|=O\!\left(\sum_{i=1}^m \log \ell_i\right)=O(m\log m)$.
	
	Now fix any $n\in\N$ and choose $m$ such that $n_{m-1}\le n\le n_m$ (so $n-n_{m-1}\le \ell_m$).
	By the same argument used above, for every oracle $C$,
	\[
	K^{C}(Z\restr(n_{m-1}+(n-n_{m-1})))\le K^{C}(Z\restr n_{m-1})+(n-n_{m-1})+O(\log (n-n_{m-1})),
	\]
	and in particular both $K^{B}(Y\restr n)$ and $k(n)$ differ from their values at $n_{m-1}$ by at most $O(\ell_m)$.
	Hence
	\[
	\bigl|K^{B}(Y\restr n)-k(n)\bigr|
	\le \bigl|K^{B}(Y\restr n_{m-1})-k(n_{m-1})\bigr|+O(\ell_m)
	= |E_{m-1}|+O(\ell_m)
	= O(m\log m)+O(m).
	\]
	Since $n_m=m(m+1)/2$, we have $m\le \sqrt{2n_m}\le 2\sqrt{n}$ for all large $n$, and therefore
	$O(m\log m)=O(\sqrt{n}\log n)=o(n)$. Thus $\bigl|K^{A}(X\restr n)-K^{B}(Y\restr n)\bigr|=o(n)$.
	
	Let $c_{sym}$ be the constant from the asymptotic error term in the symmetry of information for non-relativised Kolmogorov complexity (see Theorem 3.10.2 in \cite{Downey10}).
	Applying symmetry of information (once unrelativised and once relativised to $B$) to $\tau_m=\tau_{m-1}\rho_m$ yields
	\[
	K(\tau_m)=K(\tau_{m-1})+K(\rho_m\mid\tau_{m-1})+O(\log \ell_m),
	\qquad
	K^{B}(\tau_m)=K^{B}(\tau_{m-1})+K^{B}(\rho_m\mid\tau_{m-1})+O(\log \ell_m),
	\]
	where the hidden constants in the two $O(\log\ell_m)$ terms are bounded by $c_{sym}$ and $c_{sym}^B$, respectively.
	Subtracting and using $K(\rho_m\mid\tau_{m-1})=t_m+O(1)$ and $K^{B}(\rho_m\mid\tau_{m-1})=t_m+O(1)$ gives
	\[
	K(\tau_m)-K^{B}(\tau_m)=\bigl(K(\tau_{m-1})-K^{B}(\tau_{m-1})\bigr)+O(\log \ell_m),
	\]
	and hence $\bigl|K(\tau_m)-K^{B}(\tau_m)\bigr|=O(m\log m)$.
	
	For an arbitrary $n$ with $n_{m-1}\le n\le n_m$, the same extension estimate implies that both
	$K(Y\restr n)$ and $K^{B}(Y\restr n)$ differ from their values at $n_{m-1}$ by at most $O(\ell_m)$, so
	\begin{align*}
		\bigl|K(Y\restr n)-K^{B}(Y\restr n)\bigr|
		&\le \bigl|K(Y\restr n_{m-1})-K^{B}(Y\restr n_{m-1})\bigr|+O(\ell_m) \\
		&= O(m\log m)+O(m) \\
		&= O(\sqrt{n}\log n)
		= o(n).
	\end{align*}
	Dividing by $n$ and taking $\liminf$ yields $\cdim(Y)=\cdim^{B}(Y)$.
\end{proof}

\subsection{Equivalence of Faithfulness for Constructive Cantor Coverings}\label{subsec:EqlCantor} 

We first show that if a class of computable Cantor coverings $\Phi$ is faithful with respect to constructive
dimension, then $\Phi$ is also faithful with respect to Hausdorff
dimension.

\begin{lemma} \label{lem:Cantor1}
	For any class of computable Cantor coverings $\Phi$, if for all $\F \subseteq
	\X$, $\cdim(\F) = \cdim_\Phi(\F)$, then for all $\F \subseteq \X$,
	$\dim(\F) = \dim_\Phi(\F)$.
\end{lemma}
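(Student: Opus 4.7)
The plan is to combine the Point-to-Set Principles (Corollaries \ref{thm:PSP} and \ref{thm:pspforCantor}) with the oracle-chasing construction of Theorem \ref{thm:KCChasinglemma} to lift the unrelativized constructive-faithfulness hypothesis up to the Hausdorff level. Lemma \ref{lem:PhidimGeqdim} already gives $\dim_\Phi(\F) \geq \dim(\F)$ for free, so it suffices to establish $\dim_\Phi(\F) \leq \dim(\F)$.

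Fix any rational $s > \dim(\F)$. By the classical Point-to-Set Principle (Corollary \ref{thm:PSP}) there exists an oracle $A \subseteq \N$ with $\sup_{x \in \F} \cdim^A(x) \leq s$. I will show that the \emph{same} oracle $A$ witnesses $\sup_{x \in \F} \cdim_\Phi^A(x) \leq s$; by Corollary \ref{thm:pspforCantor} this forces $\dim_\Phi(\F) \leq s$, and letting $s \searrow \dim(\F)$ closes the argument.

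Fix any $x \in \F$ and let $X \in \Sigma^\infty$ be its binary expansion. I would invoke Theorem \ref{thm:KCChasinglemma} on $X$, $A$, and the trivial choice $B = \emptyset$, obtaining a sequence $Y_x \in \Sigma^\infty$ with
\begin{equation*}
|K^A(X \upharpoonleft n) - K(Y_x \upharpoonleft n)| = o(n).
\end{equation*}
Let $y_x \in [0,1]$ be the real whose binary expansion is $Y_x$. Lemma \ref{lem:KCsameMeansDimsSame} then delivers the two crucial transfers $\cdim(y_x) = \cdim^A(x)$ and $\cdim_\Phi(y_x) = \cdim_\Phi^A(x)$. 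The constructive-faithfulness hypothesis, specialised to singletons via Lemma \ref{lem:CPhidimSetPoint}, yields $\cdim_\Phi(y_x) = \cdim(y_x)$, so chaining the three equalities gives $\cdim_\Phi^A(x) = \cdim^A(x) \leq s$. Taking the supremum over $x \in \F$ finishes the argument.

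The main obstacle is the asymmetry between the hypothesis (unrelativized faithfulness) and what the Point-to-Set Principle produces (a bound relative to an oracle $A$): one cannot apply the hypothesis directly to $x$, because $\cdim^A(x)$ may differ from $\cdim(x)$. Theorem \ref{thm:KCChasinglemma} is precisely the bridge that replaces $x$ by a surrogate $y_x$ whose unrelativized Kolmogorov-complexity profile mirrors that of $X$ relative to $A$, thereby opening the door for the unrelativized hypothesis; Lemma \ref{lem:KCsameMeansDimsSame} then ferries the resulting $\Phi$-dimension bound back to $x$. Since the heavy combinatorial construction of Theorem \ref{thm:KCChasinglemma} is already in hand, what remains is a clean application of the two PSPs and the oracle-transfer lemma.
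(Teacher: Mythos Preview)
Your proposal is correct and follows essentially the same approach as the paper: both invoke Theorem~\ref{thm:KCChasinglemma} with $B=\emptyset$ to produce a surrogate $y$ whose unrelativized complexity profile matches that of $x$ relative to $A$, apply Lemma~\ref{lem:KCsameMeansDimsSame} to transfer dimensions, use the constructive-faithfulness hypothesis on $y$, and then close with the two Point-to-Set Principles (Corollaries~\ref{thm:PSP} and~\ref{thm:pspforCantor}). The only cosmetic difference is that the paper first establishes the pointwise identity $\cdim^A(x)=\cdim_\Phi^A(x)$ for \emph{all} oracles $A$ and then equates the two $\min\sup$ expressions directly, whereas you handle one inequality via Lemma~\ref{lem:PhidimGeqdim} and fix a single minimizing oracle for the other; the underlying mechanism is identical.
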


\begin{proof} 
	
	We first show that if $\cdim(\F) = \cdim_\Phi(\F)$ for all $\F
	\subseteq \X$ , then for all $A \subseteq \N$ and $\x \in [0,1]
	\setminus \Q$, $\cdim^A(\x) = \cdim^A_\Phi(\x)$.  
	
	Let $B = \emptyset$. From Theorem \ref{thm:KCChasinglemma}, we have
	that for all  $\x \in [0,1]$, and $A \subseteq \N$, there
	exists a $y \in [0,1]$,  such that for all $n \in \N$,
	$|K^A(X \restr n) - K(Y \restr n)| = o(n)$. 
	Here $X$ and $Y$ are binary expansions of $x$ and $y$ respectively.
	Therefore from Lemma \ref{lem:KCsameMeansDimsSame}, we have
	that $\cdim^A(x) = \cdim(y)$ and $\cdim^A_\Phi(x) =
	\cdim_{\Phi}(y)$.
	
	Since $\Phi$ is faithful with respect to constructive
	dimension,   $\cdim(y) = \cdim_\Phi(y)$. 
	Therefore we have that $\cdim^A(x) = \cdim_\Phi^A(x)$.
	
	Let $\F \subset \X$ be arbitrary. From the
	point-to-set principle for dimension of Cantor coverings
	(Corollary \ref{thm:pspforCantor}), 
	\begin{align*}
		\dim_\Phi(\mathcal{F}) &= \min\limits_{A \subseteq \N} \;
		\sup\limits_{\x \in \mathcal{F}} \; \cdim^A_\Phi(\x) = \min\limits_{A \subseteq \N} \; \sup\limits_{\x \in \mathcal{F}} \; \cdim^A(\x)= \dim(\F).
	\end{align*}

	The last equality follows from the point-to-set principle for
	Hausdorff dimension (Corollary \ref{thm:PSP}). 
\end{proof}

To prove the converse, we require the construction of the set $\mathcal{I}_s$ that contains all points in $\X$ having constructive dimension equal to $s$.

\begin{definition} \label{Def:TBD}
	Given $s \in [0, \infty)$, define $ \mathcal{I}_s = \{\x \in \X \;\vert\; \cdim(\x) = s\}.$
\end{definition}

Lutz \cite{Lutz2003b} showed that the Hausdorff dimension of $\mathcal{I}_s$ is equal to $s$. We provide a simple alternate proof of this using the point-to-set principle.

\begin{lemma}[Lutz \cite{Lutz2003b}] \label{lem:dimEsequalss}
	For all $s \in [0, 1]$, $\dim(\mathcal{I}_s) = s.$
\end{lemma}

\begin{proof}
	Since for all $\x \in \mathcal{I}_s$, by definition $\cdim(\x) =
	s$, from  \cite{Lutz2003b}, we have  $\cdim(I_s) = \sup_{x \in I_s} \cdim(x)$, and so we have
	$\cdim(\mathcal{I}_s) = s$. From \cite{Lutz2003b}, we have  $\dim(\mathcal{I}_s)
	\leq \cdim(\mathcal{I}_s)$, from which it follows that
	$\dim(\mathcal{I}_s) \leq s$.
	
	Consider any $\x \in \mathcal{I}_s$. We have that $\cdim(\x) = s$. 
	Let $A =
	\emptyset$. From Theorem \ref{thm:KCChasinglemma},  for all $B \subseteq \N$, there
	exists a $y \in [0,1]$ such that for all $n \in \N$,  $|K^A(X
	\restr n) - K^B(Y \restr n)| = o(n)$ having
	$\cdim^B(Y) = \cdim(Y)$. 	Here $X$ and $Y$ are binary expansions of $x$ and $y$ respectively.
	From Lemma \ref{lem:KCsameMeansDimsSame}, we have 
	$\cdim^B(y) = \cdim^A(x) = \cdim(x) = s$. 
	
	Therefore,  for all $B \subseteq \N$, there exits a $y \in \X$ such that  $\cdim(y) = \cdim^B(y) = s$.  Hence $Y \in \mathcal{I}_s$ and therefore from the point-to-set principle for Hausdorff dimension (Corollary \ref{thm:PSP}), we have  $\dim(\mathcal{I}_s) = \min\limits_{B \subseteq \N} \sup\limits_{y \in \mathcal{I}_s} \cdim^B(y) \geq s$. 
\end{proof}

We now show that if a class of computable Cantor coverings $\Phi$ is faithful with respect to Hausdorff dimension, then $\Phi$ is also faithful with respect to constructive dimension.

\begin{lemma} \label{lem:Cantor2}
	For any class of computable Cantor coverings $\Phi$, if for all $\F \subseteq \X$, $\dim(\F) = \dim_\Phi(\F)$, 
	then for all $\F \subseteq \X$, $\cdim(\F) = \cdim_\Phi(\F)$.
\end{lemma}

\begin{proof}
	Given an $\x \in \X$ having $\cdim(\x) = s$, consider the set $\mathcal{I}_s$ from Definition \ref{Def:TBD}. From Lemma \ref{lem:dimEsequalss}, we have $\dim(\mathcal{I}_s) = s$. Since $\Phi$ is faithful with respect to Hausdorff dimension,  $\dim_\Phi(\mathcal{I}_s) = s$. 
	
	We now   show that for any $\x \in \X$ having $\cdim = s$ (or equivalently, for any $x \in \mathcal{I}_s$), $\cdim_\Phi(\x) \leq s$.
	This along with Lemma \ref{lem:PhidimGeqCdim} shows that $\cdim(\x) = \cdim_\Phi(\x)$. From Lemma \ref{lem:CPhidimSetPoint}, we have that for all $\F \subseteq \X$, $\cdim(\F) = \cdim_{\Phi}(\F)$.

	We first show that for all $B \subseteq \N$, there exists a $y \in \mathcal{I}_s$  having $\cdim^B_\Phi(y) = \cdim_\Phi(x)$.
	Let $A = \emptyset$. From Theorem \ref{thm:KCChasinglemma}, we have
	that for all  $x \in [0,1]$, and $B \subseteq \N$, there
	exists a $y \in [0,1]$,  such that for all $n \in \N$,
	$|K^A(X \restr n) - K^B(Y \restr n)| = o(n)$. Here $X$ and $Y$ are binary expansions of $x$ and $y$ respectively.
	Setting $A = \emptyset$ and using Lemma \ref{lem:KCsameMeansDimsSame}, we have  $\cdim_\Phi(x) = \cdim^B_\Phi(y)$.
	
	We now show that $\dim_\Phi(\mathcal{I}_s) \geq \cdim_\Phi(X)$.
	From the point-to-set principle for dimension of Cantor coverings (Corollary \ref{thm:pspforCantor}), we have  $	\dim_\Phi(\mathcal{I}_s) = \min\limits_{B \subseteq \N} \; \sup\limits_{y \in \mathcal{I}_s} \; \cdim^B_\Phi(y)$. From the argument given above, for all $B \subseteq \N$, there exists a $y \in \mathcal{I}_s$ having $\cdim^B_\Phi(y) = \cdim_{\Phi}(x)$.
	So it follows that $\dim_{\Phi}(\mathcal{I}_s) \geq \cdim_{\Phi}(x)$.  Since we have already shown that $\dim_\Phi(\mathcal{I}_s) = s$, it follows that $\cdim_\Phi(x) \leq s$.
\end{proof}

Therefore, we have the following theorem which states that for the classes of Cantor coverings $\Phi$, faithfulness with respect to Hausdorff and constructive dimensions are equivalent notions.

\begin{theorem} \label{thm:equivalenceOfFaithfulness}
	For any class of computable Cantor coverings $\Phi$, 
	
	$$\forall \F \subseteq \X \;;\; \dim(\F) = \dim_\Phi(\F) \iff \forall \F \subseteq \X \;;\; \cdim(\F) = \cdim_\Phi(\F).$$ 
\end{theorem}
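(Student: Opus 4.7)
The plan is to prove the two implications separately and then observe that the theorem is their conjunction. Both directions share the same conceptual engine: the Point-to-Set Principle for Cantor coverings (Corollary \ref{thm:pspforCantor}) and the classical Point-to-Set Principle (Corollary \ref{thm:PSP}) translate set-level (Hausdorff or $\Phi$) dimensions into oracle-minimized suprema of pointwise constructive (or constructive $\Phi$) dimensions, while the Kolmogorov Complexity chasing construction (Theorem \ref{thm:KCChasinglemma}) together with Lemma \ref{lem:KCsameMeansDimsSame} lets us transport equalities of pointwise dimensions across oracles. The asymmetry between ``Hausdorff'' and ``constructive'' faithfulness is therefore absorbed into the oracle minimization in the two point-to-set principles.

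For the direction ``constructive faithfulness $\Rightarrow$ Hausdorff faithfulness'' (Lemma \ref{lem:Cantor1}), I first upgrade the hypothesis $\cdim(\x) = \cdim_\Phi(\x)$ to a relativized version. Given $A \subseteq \N$ and $\x \in \X$, Theorem \ref{thm:KCChasinglemma} produces a $y$ whose unrelativized prefix complexities match $K^A$ of the binary expansion of $\x$, so Lemma \ref{lem:KCsameMeansDimsSame} forces $\cdim^A(\x) = \cdim(y)$ and $\cdim_\Phi^A(\x) = \cdim_\Phi(y)$; the unrelativized faithfulness at $y$ yields $\cdim^A(\x) = \cdim_\Phi^A(\x)$. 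Feeding this equality into Corollary \ref{thm:pspforCantor} and comparing to Corollary \ref{thm:PSP} gives $\dim_\Phi(\F) = \dim(\F)$ for every $\F$.

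For the direction ``Hausdorff faithfulness $\Rightarrow$ constructive faithfulness'' (Lemma \ref{lem:Cantor2}), I fix $\x$ with $\cdim(\x) = s$ and exploit the set $\mathcal{I}_s$ from Definition \ref{Def:TBD}. By Lemma \ref{lem:dimEsequalss}, $\dim(\mathcal{I}_s) = s$, so the Hausdorff-faithfulness hypothesis gives $\dim_\Phi(\mathcal{I}_s) = s$. Applying Corollary \ref{thm:pspforCantor} to $\mathcal{I}_s$ and then using the chasing lemma to produce, for every oracle $B$, a $y \in \mathcal{I}_s$ with $\cdim_\Phi^B(y) = \cdim_\Phi(\x)$, forces $\cdim_\Phi(\x) \leq s$. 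Together with Lemma \ref{lem:PhidimGeqCdim} and Lemma \ref{lem:CPhidimSetPoint}, this closes the loop to constructive faithfulness for every set.

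The main obstacle — already handled by the earlier machinery — is bridging oracle-relativized statements (which are inherent to any point-to-set principle) and unrelativized statements (which faithfulness is phrased in). Theorem \ref{thm:KCChasinglemma} is exactly tailored for this: it exchanges an $A$-relativized complexity profile with a $B$-relativized one (taking $B = \emptyset$ in one direction and arbitrary $B$ in the other), while Theorem \ref{lem:KCCharnofPhidim} ensures that $\cdim_\Phi$ depends only on prefix Kolmogorov complexity at the scales $m_k = \lfloor\log(n_1\cdots n_k)\rfloor$, which the chasing construction automatically matches. Given these components, the proof of Theorem \ref{thm:equivalenceOfFaithfulness} itself is simply the conjunction of Lemmas \ref{lem:Cantor1} and \ref{lem:Cantor2}.
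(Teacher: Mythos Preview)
Your proposal is correct and follows essentially the same approach as the paper: the theorem is obtained as the conjunction of Lemma \ref{lem:Cantor1} and Lemma \ref{lem:Cantor2}, each of which you argue exactly as the paper does (the chasing construction of Theorem \ref{thm:KCChasinglemma} with $B=\emptyset$ for the first direction, and with $A=\emptyset$ together with the set $\mathcal{I}_s$ for the second, followed by the appropriate Point-to-Set Principle). The only small point worth making explicit in your write-up of the second direction is that membership $y \in \mathcal{I}_s$ relies on the \emph{second} conclusion of Theorem \ref{thm:KCChasinglemma}, namely $\cdim^B(y)=\cdim(y)$, in addition to Lemma \ref{lem:KCsameMeansDimsSame}.
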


\section{Open Problems} \label{sec:Conclusion}

The following are some problems that remain open.

\begin{enumerate}
	\item Does equivalence of faithfulness (Theorem \ref{thm:equivalenceOfFaithfulness}) hold for a general $\Phi$ ?  See Conjecture \ref{conj}.
	
	\item[] We can also study a packing dimension analogue of faithfulness, and faithfulness of effective strong dimension \cite{strongdim}.   
	
	\item Does the log-limit condition characterise packing dimension faithfulness or effective strong dimension faithfulness of Cantor coverings ?
	
	\item Is there any relationship between faithfulness of Hausdorff dimension and packing dimension?
	
	\item Is there any relationship between faithfulness of constructive dimension and constructive strong dimension? 
\end{enumerate}

\section*{Acknowledgements}
We thank Laurent Bienvenu, Bill Mance and the anonymous reviewers of previous versions of this work for their comments and helpful suggestions.

\bibliography{main2,main_jabref_shared2}

\end{document}